\colorlet{darkgreen}{green!70!black}
\newcommand{\CFont}[1]{{\textup{\textsc{#1}}}\xspace}
\newcommand{\PPAD}{\CFont{PPAD}}
\newcommand{\CCE}{\CFont{CCE}}
\newcommand{\CE}{\CFont{CE}}
\newcommand{\NP}{\CFont{NP}}
\newcommand{\TFNP}{\CFont{TFNP}}
\newcommand{\EOTL}{\CFont{End-of-the-Line}}
\newcommand{\POLY}{\CFont{PolyMatrix}}
\newcommand{\QVI}{\CFont{QuasiVI}}
\newcommand{\CPE}{\CFont{Constrained-$\Phi$-Equilibrium}}
\newcommand{\ba}{\bm{a}}
\newcommand{\cA}{\mathcal{A}}
\newcommand{\Ll}{\textnormal{L}}
\newcommand{\Rr}{\textnormal{R}}
\newcommand{\iL}{{i_\Ll}}
\newcommand{\iR}{{i_\Rr}}
\newcommand{\jL}{{j_\Ll}}
\newcommand{\jR}{{j_\Rr}}
\newcommand{\cN}{\mathcal{N}}
\newcommand{\cS}{\mathcal{S}}
\newcommand{\poly}{\textnormal{poly}}
\def\ie{\textit{i.e.}\@\xspace}
\def\eg{\textit{e.g.}\@\xspace}
\newcommand{\Reals}{\mathbb{R}}
\definecolor{mygreen}{rgb}{0.0, 0.5, 0.0}
\definecolor{myorange}{rgb}{0.55, 0.62, 1}
\theoremstyle{plain}
\theoremstyle{plain}
\newtheorem{theorem}{Theorem}[section]
\theoremstyle{definition}
\newenvironment{problem}[1]{
\begin{mdframed}[topline=false,rightline=false,bottomline=false,
frametitlerule=false,
frametitlebelowskip=2pt,
innertopmargin=2pt,
outerlinewidth=13pt,
middlelinewidth=10pt,
linewidth=3pt,
linecolor=typ_blue!0!mgray!99!,
innerlinewidth=5pt,frametitle={#1}]}{\end{mdframed}}
\theoremstyle{remark}
\definecolor{niceRed}{RGB}{190,38,38}
\definecolor{Red2}{RGB}{219, 50, 54}
\definecolor{mgreen}{RGB}{160, 200, 140}
\definecolor{blueGrotto}{RGB}{5,157,192}
\definecolor{limeGreen}{HTML}{81B622}
\definecolor{myellow}{rgb}{0.88,0.61,0.14}
\definecolor{darkGreen}{HTML}{2E8B57}
\definecolor{navyBlueP}{HTML}{03468F}
\definecolor{Sepia}{HTML}{7F462C}
\definecolor{red2}{HTML}{1F462C}
\definecolor{orange2}{HTML}{FF8000}
\definecolor{mgray}{HTML}{ABB3B8}
\definecolor{lgray}{HTML}{E5E8E9}
\definecolor{myPurple}{RGB}{175,0,124}
\definecolor{mypurple2}{rgb}{0.8,0.62,1}
\definecolor{royalBlue}{HTML}{057DCD}
\definecolor{mpink}{HTML}{FC6C85}
\definecolor{lblue}{RGB}{74,144,226}
\definecolor{peagreen}{RGB}{152,193,39}
\definecolor{typ_navy}{HTML}{001f3f}
\definecolor{typ_blue}{HTML}{0074d9}
\definecolor{typ_aqua}{HTML}{7fdbff}
\definecolor{typ_teal}{HTML}{39cccc}
\definecolor{typ_eastern}{HTML}{239dad}
\definecolor{typ_purple}{HTML}{b10dc9}
\definecolor{typ_fuchsia}{HTML}{f012be}
\definecolor{typ_maroon}{HTML}{85144b}
\definecolor{typ_red}{HTML}{ff4136}
\definecolor{typ_orange}{HTML}{ff851b}
\definecolor{typ_yellow}{HTML}{ffdc00}
\definecolor{typ_olive}{HTML}{3d9970}
\definecolor{typ_green}{HTML}{2ecc40}
\definecolor{typ_lime}{HTML}{01ff70}
\definecolor{newgreen}{HTML}{83c702}
\definecolor{newpurp}{RGB}{97,96,121}
\definecolor{mathchaYellow}{HTML}{F8E71C}
\definecolor{mathchaGreen}{HTML}{7ED321}
\definecolor{mathchaPurple}{HTML}{BD10E0}
\definecolor{mathchaBlue}{HTML}{58B1FF}
\definecolor{mathchaRed}{HTML}{FF5B59}
\definecolor{metablue}{HTML}{0064E0}
\title{The Complexity of Correlated Equilibria in Generalized Games}
\date{}
\author{
    Martino Bernasconi\thanks{Bocconi University, \texttt{martino.bernasconi@unibocconi.it}}\and
    Matteo Castiglioni\thanks{Politecnico di Milano, \texttt{matteo.castiglioni@polimi.it}} \and
    Andrea Celli\thanks{Bocconi University, \texttt{andrea.celli2@unibocconi.it}} \and
    Gabriele Farina\thanks{Massachusetts Institute of Technology, \texttt{gfarina@mit.edu}}
}
\begin{document}

\maketitle

\begin{abstract}
    Correlated equilibria---and their generalizations known as $\Phi$-equilibria---are a fundamental object of study in game theory, offering a more tractable alternative to Nash equilibria in multi-player settings. While computational aspects of equilibrium computation are well-understood in some settings, fundamental questions are still open in \emph{generalized games}, that is, games in which the set of strategies allowed to each player depends on the other players' strategies. These classes of games model fundamental settings in economics, and have been a cornerstone of economics research since the seminal paper of \citet{arrow1954existence}.
    Recently, there has been growing interest, both in economics and in computer science, in studying correlated equilibria in generalized games. It is known that finding a social welfare maximizing correlated equilibrium in generalized games is \NP-hard. However, the existence of efficient algorithms to find \emph{any} equilibrium remains an important open question. In this paper, we answer this question in the negative, showing that this problem is \PPAD-complete.
\end{abstract}

\section{Introduction}

Game Theory is a fundamental tool to study the interaction between rational agents. From an application point of view, it has found important applications such as training GANs \citep{goodfellow2020generative}, auctions \citep{Nisan2007}, and superhuman performance in strategic games \citep{brown2019superhuman, meta2022human}. 
In a series of papers, originating from \citet{papadimitriou1994complexity} and culminating in \citet{daskalakis2009complexity, chen2009settling}, the problem of computing a Nash equilibrium in multiplayer games was proven to be complete for the complexity class \PPAD, which contains problems for which no efficient algorithm is known.
On the other hand, correlated equilibria \citep{aumann1987correlated}, coarse-correlated equilibria \citep{moulin1978strategically}, and their generalization known as $\Phi$-equilibria \citep{greenwald2003general}, are a relaxation of Nash equilibria, %
and provide a computationally tractable alternative to Nash in most instances.

The seminal result of \citet{nash1951non} for the existence of equilibrium in non-cooperative games was extended to \emph{generalized games} (also known as pseudo-games or abstract economies) by \citet{arrow1954existence} and \citet{rosen1965existence}. In these games, the set of feasible strategies of one player depends on the strategies played by the other players.

{
Coupled constraints between players model various applications involving finite resources, where consumption depends on the joint actions of the players. For example, generalized games have been applied to dynamic pricing \citep{adida2010dynamic}, electricity markets \citep{hobbs2007nash}, and Fisher markets \citep{branzei2014fisher, conitzer2022multiplicative}.
A particularly relevant example of joint constraints is bidding in repeated auctions under budget constraints, where budget consumption depends on the joint bids of all players. It is well known that when all players follow no-regret strategies, their average play converges to a correlated equilibrium \citep{cesa2006prediction}. In online advertising, where many companies deploy automated bidders that act as ``cost-aware'' regret minimizers on behalf of advertisers \citep{balseiro2019learning,aggarwal2019autobidding}, it becomes particularly important to ask whether similar convergence guarantees hold for ``cost-aware'' no-regret algorithms (such as the algorithms presented in \citet{mannor2009online, mahdavi2012trading}).
}

Thus, an important and natural question is to determine if there are efficient algorithms that compute correlated equilibria in generalized games.
While \citet{bernasconi2023constrained} proved the computational infeasibility of computing a social welfare maximizing correlated equilibrium in generalized games, the complexity of computing \emph{any} equilibrium remains open (see also \citet{zhang2025expected}). 
In this paper, we resolve this open question.

\subsection{Contributions and Overview of Techniques}

The main contribution of this paper is to show that the problem of computing an approximate $\Phi$-equilibrium in generalized games (as formalized in the \CPE problem in \cref{sec:hardness}) is \PPAD-complete. The hardness holds for approximation factors that are inverse-polynomial in the instance both in (i) games with two players, and (ii) games with an arbitrary number of players even for a constant number of actions and constraints per player. By providing this characterization, our paper puts to rest a recent line of inquiry in the literature \citep{bernasconi2023constrained, zhang2025expected}. %

At a fundamental level, the question of the complexity of correlated equilibria in generalized games boils down to the tension between two opposing forces:
\begin{enumerate}[itemsep=1pt, left=3mm]
    \item On the one hand, the introduction of \emph{correlation} ---as opposed to the \emph{independence} between the play of different players required by Nash equilibria--- is a major driver of computational feasibility. For instance, while Nash equilibria are believed to be computationally intractable, correlated solution concepts \citep{aumann1987correlated} are typically computable in polynomial time, even in games with exponentially many actions \citep{papadimitriou2005computing, farinapolynomial, daskalakis2024efficient}.
    \item On the other hand, the introduction of coupled constraints often has adverse effects on computation \citep{daskalakis2021complexity, papadimitriou2023computational, bernasconi2024role, anagnostides2025complexity}, even when the constraints are \emph{simple}, \eg, when they define a polytope in the joint strategy space.
\end{enumerate}
Thus, a natural and fundamental question that tries to resolve this tension is the following:
\addtolength\leftmargini{-1.0cm}
\begin{quote}\centering\itshape
Which of the two forces prevails? Does correlation remain beneficial in the face of constraints? Or are constraints so detrimental to push the complexity of correlated equilibria in generalized games to the realm of intractability?
\end{quote}

We resolve this question by showing that correlation is not a sufficient relaxation in the presence of coupled constraints. We prove this result by a reduction from the problem of computing Nash equilibria in polymatrix games.
At an intuitive level, we could say that the computation of linearly constrained correlated equilibria is indistinguishable from the computation of unconstrained Nash (\ie, independent) equilibria. In other words, the combination of correlation and constraints can simulate independence without constraints.

At the technical level, our main tool is to introduce a ``left'' and ``right'' copy of each player in the polymatrix instance, thus creating two teams of players. Each player on one team then plays only with players on the opposing team in a team zero-sum game. We then exploit the constraints to couple the players of the left team to the players of the right team by asking that their marginals coincide. This allows us to essentially remove the terms that are linear in the correlated strategy and only keep the ones that are linear in the marginals (which now resemble the ones of Nash equilibria).

One additional implication of our result is that the quasi-polynomial time algorithm for computing approximate $\Phi$-equilibria of \citet{bernasconi2023constrained} is tight. Indeed, under the Exponential Time Hypothesis for \PPAD (introduced by \citet{rubinstein2017settling}), we show that the problem of computing a (coarse) correlated equilibria in generalized games requires quasi-polynomial time, even with a constant number of players and a constant violation of the incentive compatibility constraints.

Furthermore, we emphasize that although numerous works have established positive convergence guarantees for regret minimization in the unconstrained setting \citep[\eg,][]{hart2000simple}, our results demonstrate that no efficient no-regret algorithms can, in general, converge to constrained correlated equilibria. This impossibility has broad implications for systems of constrained no-regret learners, as it precludes convergence unless specific properties of the problem are exploited.
A concrete example that has recently attracted significant theoretical and empirical attention is that of Internet advertising platforms, where automated bidding agents are typically operated by the platform itself (see, \eg, \citep{agarwal2014budget,balseiro2019learning,balseiro2021robust,paes2024complex}).
In such environments, convergence to an equilibrium is often desirable since it promotes stability, predictability, and alignment with the advertisers’ individual incentives. Our result, however, shows that this goal cannot be achieved in a ``black-box'' manner. To design autobidders with convergence guarantees, one must develop algorithms that exploit the specific structural properties of the problem, since no general-purpose approach can ensure convergence in this setting.

\section{Preliminaries}
\subsection{Multi-player Games and \texorpdfstring{$\Phi$}{Phi}-equilibrium}

We consider an $n$-player game. Each player has a set $A=[\ell]$ of actions, and for each player $i\in[n]$ and each tuple of actions $\ba \in \cA:=A^n$, the utility of player $i$ is $u_i(\ba)\in[0,1]$.\footnote{Given an integer $i \in \mathbb{N}$, we denote with $[i]$ the set $\{1,\ldots,i\}$.}
A $\Phi$-equilibrium is defined on correlated strategies $z\in \Delta(\cA)$. In particular, a mediator samples actions $\ba=(a_1,\ldots, a_n)\sim z$ and communicates to each player $i$ its own action $a_i$. Conditioned on the action recommendation, each player can now decide how to deviate. In particular, each player can pick a function $\phi\in \Phi$ that prescribes its deviation from a set of functions $\Phi$.
More specifically, each function $\phi\in \Phi$ maps each action $a\in A$ into a randomized action $\phi(a) \in\Delta(A)$.
Note that each function $\phi$ can be represented by a $\ell\times\ell$-dimensional right stochastic matrix on which each row corresponds to a specific action, \ie $\phi(a,b)$ is the probability of playing $b$ when the player is recommended action $a$. By applying a deviation $\phi$ to a correlated strategy $z$ the $i$-th player induces a distribution $\phi\circ_i z$ on $\cA$.
Formally, for all $\ba=(a_i,\ba_{-i})\in \cA$ we let
\[
    (\phi\circ_i z)(\ba) = \sum_{b \in A} \phi(b, a_i) z(b, \ba_{-i}).
\]
Moreover, for a function $F:\cA\to\Reals$ and a distribution $z\in \Delta(\cA)$, with abuse of notation we write $F(z)$ to denote the expected value of $F$ under the distribution $z$, \ie $\sum_{\ba\in\cA}F(\ba)z(\ba)$. For instance, we denote the expected utility of player $i$ under distribution $z$ as $u_i(z):= \sum_{a\in \mathcal{A}} z(\ba) u_i(\ba)$.

An $\epsilon$-$\Phi$-equilibrium is a correlated distribution over strategies such that
\[
    u_i(z)\ge u_i(\phi\circ_i z)-\epsilon\quad \forall i \in [n], \phi\in\Phi.
\]
We call a $\Phi$-equilibrium a $0$-$\Phi$-equilibrium.
In \Cref{sec:examples} we formally write the sets of functions $\Phi_\CE$ and $\Phi_\CCE$, which define correlated (CE) and coarse-correlated equilibria (CCE), respectively. Intuitively, CCEs deviations are deviations that cannot depend on the recommended action, while CE deviations are a larger set where deviations could be different based on the recommended action.

\subsection{Constrained $\Phi$-equilibria}
{To model situations in which there are common shared resources, we introduce costs that depend on the joint actions of all players.}
In particular, each player $i\in[n]$ has $m$ costs $\{C_i^{j}(\ba)\}_{j\in[m]}\in[-1,1]^m$ associated with each tuple of action $\ba\in\cA$.
For any $\nu\ge 0$, we define as $\cS^\nu_i$ the set of correlated strategies that are $\nu$-safe (in expectation) for the $i$-th player, \ie,
\[
\cS^\nu_i:= \left\{z\in\Delta(\cA): C_i^j(z)\le \nu \quad \forall j\in [m] \right\}.
\]
Intuitively, $\cS^\nu_i$ guarantees that the expected cost of player $i$ is at most $\nu$ for each of its resources.
Moreover, we define the set $\cS^\nu=\cap_{i\in[n]}\cS^\nu_i$ as the set of strategies that are $\nu$-safe for all players.
For each $z\in \cS^\nu$ and player $i\in [n]$ we can define the set of \emph{safe}-deviations, \ie, 
\[
\Phi_i^{\cS}(z):=\{\phi\in \Phi: (\phi \circ_i z) \in \cS_i\}.
\]
To guarantee that the game is well-defined and the existence of solutions, we make the following mild assumption. This assumption requires that all agents always have a deviation that ensures that cost constraints are satisfied. Formally, for all $z\in \Delta(\cA)$ and $i\in[n]$, it holds that $\Phi_i^{\cS}(z)\neq\emptyset$. This assumption is required to guarantee that players always have safe strategies, and it is common in the literature. \citet{bernasconi2023constrained} proved existence (but not \PPAD-membership) with a stronger notion related to strict feasibility, which was then removed in subsequent works  \citep{boufous2024constrained, ni2025characterization}.
This assumption is generally satisfied, as it is reasonable to let players have a void action with zero utility and zero cost, irrespective of other players' strategies.
    We say that a correlated strategy $z\in \cS^\nu$ is a Constrained $(\epsilon,\nu)$-$\Phi$-equilibrium if it holds that $u_i(z)\ge u_i(\phi\circ_i z)-\epsilon$ for all players $i\in[n]$ and deviations in $\phi\in\Phi_i^{\cS}(z)$.
Intuitively, ignoring the technicalities related to the relaxation $\nu$, a correlated strategy $z$ is a constrained $\epsilon$-$\Phi$-equilibrium if $z$ is safe for each player and each player cannot earn more than $\epsilon$ by applying a safe deviation to the correlated strategy $z$.

\subsection{Polymatrix and team games}\label{sec:preliminariesTeams}

Polymatrix games are a type of multiplayer games in which the $n$ players interact only with a subset of other players and the game between any two players is a bi-matrix game. Formally, we can introduce the following computational problem:

\begin{problem}{\POLY}\label{pr:polymatrix}
\noindent\textbf{Input:} a graph $G=(V,E)$, a matrix $A^{i,j}\in [0,1]^{k\times k}$ for each $(i,j)\in E$, and an approximation $\epsilon>0$.

\noindent\textbf{Output:} vectors $x_i\in\Delta([k])$, $i \in V$ such that:
\[
\sum_{j\in V:(i,j)\in E} x_i^\top A^{i,j} x_j \ge \sum_{j\in V:(i,j)\in E} \tilde x_i^\top A^{i,j} x_j-\epsilon\quad\forall  i \in V,\tilde x_i\in\Delta([k]).
\]
\end{problem}

In a breakthrough result \citet{rubinstein2015inapproximability} (later improved by \citet{deligkas2022pure}) proved constant inapproximability of \POLY for constant number of actions and degree of the graph.

\begin{theorem}[{\citet[Theorem~1]{rubinstein2015inapproximability}}]\label{th:polymatrix}
    There exists a constant $\epsilon^*$ such that \POLY is \PPAD-complete, even when the graph has degree $3$ and each player has $2$ actions.
\end{theorem}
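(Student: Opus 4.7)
The plan is to reduce from a canonical \PPAD-complete fixed-point problem, namely a gap variant of \Brouwer (or equivalently \EOTL) over a low-dimensional hypercube, through graphical games, and finally to a polymatrix instance of bounded degree and constant action size. The backbone of the argument is that polymatrix games are exactly graphical games with additive payoff structure over the neighbors, so if one can produce a graphical game whose approximate Nash equilibria encode (with constant loss) the approximate fixed points of a Lipschitz map, one obtains \POLY-hardness as a corollary. Concretely, I would (i) invoke a constant-inapproximability result for \EOTL-style objects (this is the Rubinstein-style ``PCP for \PPAD'' ingredient); (ii) build a graphical game in which each vertex of the graph simulates one coordinate of a Brouwer candidate, with edges encoding the local arithmetic (averaging, thresholding, small linear combinations) needed to evaluate the displacement at that coordinate; (iii) verify that the additive-over-neighbors payoff structure is preserved so the graphical game is in fact polymatrix; and (iv) degree-reduce and action-reduce so that each player has only two actions and the interaction graph has degree at most three.

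For step (i), the source of the constant gap is a gap-preserving reduction from the ``end-of-a-long-line'' or ``tight Brouwer'' problem; the standard strategy is to take an End-of-the-Line instance, amplify it via a Rubinstein-style birthday repetition (or the Deligkas et al.~improvement, which avoids repetition and works directly from a constrained graphical gadget), and embed the resulting high-dimensional combinatorial object into a short, low-dimensional, Lipschitz Brouwer function whose $\epsilon^\star$-approximate fixed points provably correspond to genuine End-of-the-Line solutions. For step (ii), I would use the standard Daskalakis--Goldberg--Papadimitriou arithmetic gadgets (addition, multiplication by a constant, comparison, thresholding) implemented as small games between two or three players; the key is that each gadget is a bi-matrix game, so stringing them together across the graph automatically yields a polymatrix structure. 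Step (iii) is mostly bookkeeping: make sure that the payoff of each player is written as a sum over edges, replacing any multi-way interactions by chains of auxiliary players. Step (iv) performs the final sparsification: any bounded-degree polymatrix game with $k$ actions can be simulated by one with $2$ actions and degree $3$ by fanning out each player into a gadget of $O(\log k)$ binary copies with equality-enforcing bimatrix games between adjacent copies, taking care that the constant inapproximability survives this transformation (which it does because the blowup is $O(1)$ when $k=O(1)$).

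The main obstacle, and what makes this a theorem rather than a folklore observation, is keeping the inapproximability constant $\epsilon^\star$ throughout the pipeline. Each of the gadget reductions (arithmetic gates, degree reduction, action reduction) typically incurs a multiplicative loss in the approximation parameter; a naive composition would push the final gap down to inverse polynomial, giving only the weaker Chen--Deng--Teng style hardness. The delicate part is therefore to design each gadget to be \emph{robust}, i.e.~to convert any $\epsilon$-approximate equilibrium of the gadget into an $O(\epsilon)$-approximate witness for the underlying arithmetic relation, and to ensure that the number of gadgets traversed on any ``error-propagation path'' in the reduction is $O(1)$. This is exactly what Rubinstein's averaging arguments accomplish, and what Deligkas, Fearnley, Hollender, and Melissourgos later streamlined to obtain the tightened version with degree $3$ and binary actions quoted in the theorem. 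I would therefore expect the heaviest technical lifting to lie in the robustness analysis of the arithmetic and degree-reduction gadgets, with the overall scaffolding of the reduction being comparatively routine.
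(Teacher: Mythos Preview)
The paper does not prove this statement at all: it is quoted verbatim as a black-box result from \citet{rubinstein2015inapproximability} (with the later sharpening by \citet{deligkas2022pure} mentioned in passing), and is used only as an input to the reduction in \Cref{th:hardness}. There is therefore no ``paper's own proof'' to compare your proposal against.

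That said, your sketch is a faithful high-level summary of how the cited result is actually established in the external references: a constant-gap Brouwer/\PPAD source, arithmetic gadgets \`a la Daskalakis--Goldberg--Papadimitriou implemented as edge games, and a final degree/action sparsification, with the crux being that error propagation stays $O(1)$. If your intent was to reconstruct Rubinstein's argument, the outline is sound; but for the purposes of this paper, the correct ``proof'' of \Cref{th:polymatrix} is simply a citation.
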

An important subclass of polymatrix games are  \emph{two-team zero-sum game}s, in which $V$ can be partitioned into two teams such that: (i) each edge between players of different teams is a zero-sum game, and (ii) each edge between players of the same team is a coordination game. If one of the teams has no internal coordination edges, these games are called \emph{two-team polymatrix zero-sum game with independent adversaries} \citep{hollender2024complexity}. Here we also consider the case in which there are no coordination games in between both teams, \ie, the graph $G$ is fully bipartite. We call these \emph{two-team polymatrix zero-sum game with independent teams}. These games are special cases of zero-sum polymatrix games that admit LP-based polynomial-time algorithms \citep{cai2016zero}.

\subsection{\PPAD-hardness and Total Problems}

The class \PPAD is an important subclass of total search problems \ie, \TFNP, which includes search problems with solutions that can be checked in polynomial time.
\PPAD was introduced by \citet{papadimitriou1994complexity} to capture problems whose totality is guaranteed by parity-like arguments and, equivalently, by fixed-point arguments such as Brouwer's. 
\PPAD-complete problems are unlikely to be solvable in polynomial time. This is also supported by several lower bounds based on cryptographic assumptions \citep{bitansky2015cryptographic, garg2016revisiting, choudhuri2019finding}.

Moreover, for ease of presentation, we use a more manageable class of total problems, as we allow for promise problems, \ie, problems whose totality is guaranteed only if the instance satisfies some additional properties (promises) whose validity is not clear how to check in polynomial time.\footnote{Formally, to guarantee the totality of our problems we should accept as solutions a violation of the promise (see, \cite{fearnley2020unique, fearnley2022complexity, hollender2021structural} for an in-depth treatment of promise-preserving reductions.} %

\section{Reduction from \POLY to \CPE}\label{sec:hardness}

We define a computational search problem associated with constrained $\Phi$-equilibria.

\begin{problem}{\CPE}
    \noindent\textbf{Input:} $\epsilon'>0$, $\nu>0$, an utility function $u_i:\mathcal{A} \to [0,1]$ and $m$ cost functions $C_i^j: \mathcal{A}\to[-1,1]$, $j\in[m]$ for each player $i$, and a set of possible deviations $\Phi$ (given as a list of linear inequalities, defining a polytope of right stochastic matrices) with $A\in\Reals^{k\times 2\ell}$ and $b\in\Reals^k$.
    
    \noindent\textbf{Output:}
    a vector $z\in\cS^\nu$ such that
    \(
     u_i(z)\ge u_i(\phi\circ_i z)-\epsilon'\quad\forall i \in [n], \phi\in\Phi_i^{\cS}(z).
    \)
    
    \noindent\textbf{Promise:} It is promised that $\Phi_i^{\cS}(z)\neq\emptyset$ for all $z\in \Delta(\cA)$ and $i\in[n]$.
\end{problem}

Formally, to fix a representation of the output, we think of $z$ as a mixture of product distributions. Notice that such a representation is essential to guarantee a representation polynomial in the support and independent of $|\mathcal{A}|$.

In the following, we will provide a reduction from \POLY to \CPE.

\textbf{Intuitive high-level idea of the proof}~
To give a high-level intuition of the reduction, we sketch a simplified reduction from the problem of computing Nash equilibria in two-player general-sum games.
Given a general-sum game $A,B\in[0,1]^{k\times k}$, we can write a feasibility LP for coarse correlated equilibria as follows:\footnote{We denote with $z_{-i}$ the marginalization of $z$ with respect to the $i$-th player.}
\begin{subequations}\label{eq:FeasibilityLP}
\begin{align}[left={\text{Find $z$ such that: }\empheqlbrace}]
&A(z)\ge A(\tilde x\otimes z_{-1})\,\forall \tilde x\in\Delta([k])\label{eq:FeasibilityLP1}\\
    &B(z)\ge B(z_{-2} \otimes \tilde y)\,\forall \tilde y\in\Delta([k])\label{eq:FeasibilityLP2}\\
    &z\in\Delta([k]^2)\notag
\end{align}
\end{subequations}

If we were to add a constraint forcing $z$ to be a product distribution, \ie, $z=x\otimes y$, then \Cref{eq:FeasibilityLP} would turn into a feasibility problem whose solutions are Nash equilibria of $A,B$.\footnote{Note that $A(x\otimes y)=x^\top A y$.} This (non-linear) constraint is exactly what makes computing Nash equilibria hard.
Thus, ideally, we would like to impose constraints to force the correlated strategy to be a product distribution. However, the constraints we can employ are linear in $z$, and we cannot follow this idea directly.

By observing Problem \ref{eq:FeasibilityLP}, we can see that the right-hand sides of the two incentive compatibility (IC) constraints \eqref{eq:FeasibilityLP1} and \eqref{eq:FeasibilityLP2}, are linear only in the marginals. Thus, it could be convenient to: $(i)$ impose the costs to the $x$ player such that the marginal of the $x$ player copies the marginal of the $y$ player, \ie, $z_{-1}=z_{-2}$ (note that this is a linear constraint in $z$), $(ii)$ take $A=-B$ so to erase the terms linear in $z$ when summing the two IC constraints of \Cref{eq:FeasibilityLP}.

It is easy to see that, due to $(i)$, we also have that the deviations $\tilde x$ admissible to the $x$ player are just $\{z_{-1}\}$. On the other hand, there are no costs on the second player, and thus the deviations $\tilde y$ are free. Define $h=z_{-1}=z_{-2}$. Then summing \eqref{eq:FeasibilityLP1} and \eqref{eq:FeasibilityLP2} we obtain
\[
A(z)+B(z)\ge A(h\otimes h)+B(h\otimes \tilde y)\stackrel{\text{Due to $(ii)$}}{\implies} h^\top Bh\ge h^\top B\tilde y\quad\forall \tilde y\in \Delta([k]).
\]
It is easy to choose $B$ such that it is \PPAD-hard to find a solution to the right-hand side inequality above. For example, we can choose $B=\begin{bmatrix} 0 &\tilde A\\ \tilde B^\top & 0\end{bmatrix}$ for another two-player general-sum game $\tilde A,\tilde B$. This choice introduces two additional players for each player of the original game, essentially creating two opposing teams playing a zero-sum game.

The above construction is informal and skips many of the technical challenges, such as the fact that we are only allowed to use constraints that are satisfied approximately. Also, we derive a more general result and broader consequences by reducing from \POLY instead of computing Nash in two-player games (see Section \ref{sec:implications}). This is indeed one central part of the formal reduction that we will present in the next section.

Given a graph $G=(V,E)$, we denote with $\deg(G)$ the maximum degree of graph $G$.
Then, the main result of this section reads as follows:
\begin{theorem}\label{th:hardness}
    There is a polynomial-time reduction from instances $G=(V,E)$ of \POLY with $n$ players, $k$ actions per player, and approximation $\epsilon$ 
    to instances of \CPE with approximation $\epsilon'=O(\frac{\epsilon}{n})$ and $\nu=O(\frac{\epsilon}{n k \deg(G) })$ and \CCE deviations.
\end{theorem}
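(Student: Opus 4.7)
The plan is to reduce from \POLY by realizing the two-player informal argument at the level of the full polymatrix game, using a ``left/right team'' duplication plus marginal-matching cost constraints. Given a \POLY instance $(G=(V,E),\{A^{i,j}\}_{(i,j)\in E},\epsilon)$ with $n=|V|$, I would create $2n$ players partitioned into a left team $\{\iL : i\in V\}$ and a right team $\{\iR : i\in V\}$, each with action set $[k]$, and design the utilities so that the $L$-vs-$R$ interaction is a two-team zero-sum polymatrix game encoding the original. For each edge $(j,i)\in E$ (where $A^{j,i}$ is the ``$j$-row, $i$-column'' matrix contributing to player $j$'s utility) I add $+A^{j,i}_{a_{\jR},a_{\iL}}$ to $u_{\jR}(\ba)$ and $-A^{j,i}_{a_{\jR},a_{\iL}}$ to $u_{\iL}(\ba)$, so that $\sum_i(u_{\iL}+u_{\iR})\equiv 0$; an affine rescaling brings the utilities into $[0,1]$. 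Only left players carry costs: for each $i\in V$ and $a\in[k]$, player $\iL$ owns the two linear costs $\pm(\mathbbm{1}[a_{\iL}=a]-\mathbbm{1}[a_{\iR}=a])$, which together enforce $z_{\iL}(a)=z_{\iR}(a)$ in expectation up to the slack $\nu$. The promise $\Phi_i^{\cS}(z)\neq\emptyset$ is immediate, since for $\iL$ the \CCE deviation $\tilde x=z_{\iR}$ is always trivially safe, and right players carry no cost.

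For the analysis, given a solution $z$ to the constructed \CPE instance I would set $h_i:=z_{\iL}$ (the $L$-team marginals) and argue that $(h_i)_{i\in V}$ is an $\epsilon$-Nash of the polymatrix game. The cost constraints ensure $\|z_{\iL}-z_{\iR}\|_\infty\le\nu$ and squeeze the safe \CCE deviations $\tilde x$ of $\iL$ to $\|\tilde x-z_{\iR}\|_\infty\le\nu$. Plugging the specific safe deviation $\tilde x=z_{\iR}$ into the IC of each $\iL$, and an arbitrary deviation $\tilde y_j$ into the (unconstrained) IC of each $\jR$, one obtains
\begin{align*}
\sum_{j:(j,i)\in E}\mathbb{E}_z\!\left[A^{j,i}_{a_{\jR},a_{\iL}}\right] &\le \sum_{j:(j,i)\in E} z_{\jR}^{\top} A^{j,i} z_{\iR}+\epsilon',\\
\sum_{i:(j,i)\in E}\mathbb{E}_z\!\left[A^{j,i}_{a_{\jR},a_{\iL}}\right] &\ge \sum_{i:(j,i)\in E}\tilde y_j^{\top} A^{j,i} z_{\iL}-\epsilon'.
\end{align*}
Summing the first inequality over $i$ and the second over $j$ makes both left-hand sides collapse to $\sum_{(j,i)\in E}\mathbb E_z[A^{j,i}_{a_{\jR},a_{\iL}}]$; chaining the two bounds and replacing the remaining $z_{\jR},z_{\iR}$ by $h_j,h_i$ at an additive cost of $O(|E|k\nu)$ (standard bilinear-form perturbation using $\|A^{j,i}\|_\infty\le 1$) yields
\[
\sum_{(j,i)\in E}\tilde y_j^{\top} A^{j,i} h_i \;\le\; \sum_{(j,i)\in E} h_j^{\top} A^{j,i} h_i \;+\; 2n\epsilon' \;+\; O(|E|k\nu).
\]

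Since the $\tilde y_j$ may be chosen independently per right player, defining the non-negative polymatrix regrets $R_j:=\max_{\tilde y_j}\bigl(\sum_{i:(j,i)\in E}\tilde y_j^{\top} A^{j,i} h_i - \sum_{i:(j,i)\in E}h_j^{\top} A^{j,i} h_i\bigr)$ gives $\sum_j R_j\le 2n\epsilon'+O(nk\deg(G)\nu)$, and non-negativity upgrades this sum bound to a uniform per-player bound. Picking $\epsilon'=\Theta(\epsilon/n)$ and $\nu=\Theta(\epsilon/(nk\deg(G)))$ thus forces every $R_j\le\epsilon$, which is exactly the \POLY $\epsilon$-Nash condition on $h$. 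The main obstacle I anticipate is the cancellation step: the $\iL$ IC constraints are essentially inactive (their safe deviations collapse to the singleton $\tilde x\approx z_{\iR}$), yet they must still be used to convert the correlated expectation $\mathbb E_z[A^{j,i}_{\cdot,\cdot}]$ into the product expression $h_j^{\top} A^{j,i} h_i$ up to $O(k\nu)$ per edge---it is the two-team zero-sum design that makes this conversion payoff-neutral upon summation, eliminating the terms that are genuinely linear in $z$ and leaving only marginal-linear terms. A secondary delicate point is the error accounting: propagating the $\ell_\infty$-slack $\nu$ through $O(|E|)$ bilinear forms without losing more than $O(nk\deg(G)\nu)$ is what pins down the precise dependence of $\nu$ on $(\epsilon,n,k,\deg(G))$ claimed in the statement.
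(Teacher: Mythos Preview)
Your proposal is correct and follows essentially the same approach as the paper: the same left/right team duplication, the same two-team zero-sum polymatrix design with independent teams, the same marginal-matching cost constraints placed only on the left players, and the same summation-and-cancellation analysis yielding exactly the stated parameters $\epsilon'=\Theta(\epsilon/n)$ and $\nu=\Theta(\epsilon/(nk\deg(G)))$. The only cosmetic differences are that the paper sets $h_i := m_{\iR}(z)$ rather than your $h_i := z_{\iL}$ (equivalent up to $O(k\nu)$) and states the zero-sum cancellation as an explicit lemma $u_\Ll(z)=-u_\Rr(z)$ rather than as your chaining of two inequalities with coinciding left-hand sides.
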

\subsection{Proof of \Cref{th:hardness}}

\textbf{Construction}~
Given an instance of \POLY with graph $G=(V,E)$, matrices $\{A^{i,j}\}_{(i,j)\in E}$ and approximation $\epsilon$, we build an instance of \CPE as follows:

\begin{figure}
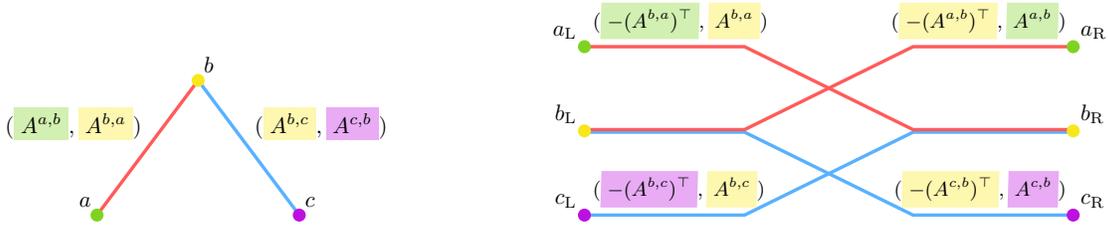

\begin{subfigure}{0.4\textwidth}
\centering
    \scalebox{0.85}{\tikzset{every picture/.style={line width=0.75pt}} %

\begin{tikzpicture}
[x=0.75pt,y=0.75pt,yscale=-1,xscale=1]
\input{settings/colors}

\draw [color=mathchaBlue  ,draw opacity=1 ][line width=1.5]    (210,90) -- (240,130) -- (270,170) ;
\draw [color=mathchaRed  ,draw opacity=1 ][line width=1.5]    (210,90) -- (180,130) -- (150,170) ;
\draw [shift={(210,90)}, rotate = 0] [color=mathchaYellow  ,draw opacity=1 ][fill=mathchaYellow  ,fill opacity=1 ][line width=0.75]      (0, 0) circle [x radius= 3.35, y radius= 3.35]   ;
\draw [shift={(270,170)}, rotate = 0] [color=mathchaPurple  ,draw opacity=1 ][fill=mathchaPurple  ,fill opacity=1 ][line width=0.75]      (0, 0) circle [x radius= 3.35, y radius= 3.35]   ;
\draw [shift={(150,170)}, rotate = 0] [color=mathchaGreen  ,draw opacity=1 ][fill=mathchaGreen  ,fill opacity=1 ][line width=0.75]      (0, 0) circle [x radius= 3.35, y radius= 3.35]   ;

\draw (148,166.6) node [anchor=south east] [inner sep=0.75pt]    {$a$};
\draw (272,166.6) node [anchor=south west] [inner sep=0.75pt]    {$c$};
\draw (212,86.6) node [anchor=south west] [inner sep=0.75pt]    {$b$};
\draw (178,126.6) node [anchor=south east] [inner sep=0.75pt]    {$(\colorbox{mathchaGreen!35}{$A^{a,b}$} ,\colorbox{mathchaYellow!35}{$A^{b,a}$})$};
\draw (242,126.6) node [anchor=south west] [inner sep=0.75pt]    {$(\colorbox{mathchaYellow!35!}{$A^{b,c}$}  ,\colorbox{mathchaPurple!35!}{$A^{c,b}$})$};

\end{tikzpicture}}
    \caption{Example of a \POLY instance with three players $V=\{a,b,c\}$.}
    \label{fig:first}
\end{subfigure}
\hfill
\begin{subfigure}{0.58\textwidth}
\centering
    \scalebox{0.85}{\tikzset{every picture/.style={line width=0.75pt}} %

\begin{tikzpicture}[x=0.75pt,y=0.75pt,yscale=-1,xscale=1]
\input{settings/colors}

\def\dx{95}
\def\dxtwo{\dx+100}
\def\dxthree{\dxtwo+\dx}

\def\dy{50}
\def\lw{1.5}
\def\xzero{0}
\def\yzero{190}
\def\yone{\yzero+\dy}
\def\ytwo{\yone+\dy}

\coordinate (AL) at (\xzero,\yzero);
\coordinate (BL) at (\xzero,\yone);
\coordinate (BLblue) at (\xzero,\yone+\lw/2);
\coordinate (BLred) at (\xzero,\yone-\lw/2);
\coordinate (CL) at (\xzero,\ytwo);
\coordinate (AL1) at (\xzero+\dx,\yzero);
\coordinate (BL1) at (\xzero+\dx,\yone);
\coordinate (BL1blue) at (\xzero+\dx,\yone+\lw/2);
\coordinate (BL1red) at (\xzero+\dx,\yone-\lw/2);
\coordinate (CL1) at (\xzero+\dx,\ytwo);
\coordinate (AR1) at (\xzero+\dxtwo,\yzero);
\coordinate (BR1) at (\xzero+\dxtwo,\yone);
\coordinate (BR1blue) at (\xzero+\dxtwo,\yone+\lw/2);
\coordinate (BR1red) at (\xzero+\dxtwo,\yone-\lw/2);
\coordinate (CR1) at (\xzero+\dxtwo,\ytwo);
\coordinate (AR) at (\xzero+\dxthree,\yzero);
\coordinate (BR) at (\xzero+\dxthree,\yone);
\coordinate (BRblue) at (\xzero+\dxthree,\yone+\lw/2);
\coordinate (BRred) at (\xzero+\dxthree,\yone-\lw/2);
\coordinate (CR) at (\xzero+\dxthree,\ytwo);

\draw [color=mathchaBlue  ,draw opacity=1 ][line width=\lw]    (BLblue)--(BL1blue)--(CR1)--(CR);%
\draw [color=mathchaBlue  ,draw opacity=1 ][line width=\lw]    (CL)--(CL1)--(BR1blue)--(BRblue);%
\draw [color=mathchaRed  ,draw opacity=1 ][line width=\lw]    (AL)--(AL1)--(BR1red)--(BRred);%
\draw [color=mathchaRed  ,draw opacity=1 ][line width=\lw]    (BLred)--(BL1red)--(AR1)--(AR);%

\draw [shift={(AL)}, rotate = 0] [color=mathchaGreen  ,draw opacity=1 ][fill=mathchaGreen  ,fill opacity=1 ][line width=0.75]      (0, 0) circle [x radius= 3.35, y radius= 3.35]   ; %
\draw [shift={(BL)}, rotate = 0] [color=mathchaYellow  ,draw opacity=1 ][fill=mathchaYellow  ,fill opacity=1 ][line width=0.75]      (0, 0) circle [x radius= 3.35, y radius= 3.35]   ;
\draw [shift={(CL)}, rotate = 0] [color=mathchaPurple  ,draw opacity=1 ][fill=mathchaPurple  ,fill opacity=1 ][line width=0.75]      (0, 0) circle [x radius= 3.35, y radius= 3.35]   ;
\draw [shift={(AR)}, rotate = 0] [color=mathchaGreen  ,draw opacity=1 ][fill=mathchaGreen  ,fill opacity=1 ][line width=0.75]      (0, 0) circle [x radius= 3.35, y radius= 3.35]   ;
\draw [shift={(BR)}, rotate = 0] [color=mathchaYellow  ,draw opacity=1 ][fill=mathchaYellow  ,fill opacity=1 ][line width=0.75]      (0, 0) circle [x radius= 3.35, y radius= 3.35]   ;
\draw [shift={(CR)}, rotate = 0] [color=mathchaPurple  ,draw opacity=1 ][fill=mathchaPurple  ,fill opacity=1 ][line width=0.75]      (0, 0) circle [x radius= 3.35, y radius= 3.35]   ;

\draw (AL) node [anchor=south east] [inner sep=3pt]    {$a_{\Ll}$};
\draw (BL) node [anchor=south east] [inner sep=3pt]    {$b_{\Ll}$};
\draw (CL) node [anchor=south east] [inner sep=3pt]    {$c_{\Ll}$};
\draw (AR) node [anchor=south west] [inner sep=3pt]    {$a_{\Rr}$};
\draw (BR) node [anchor=south west] [inner sep=3pt]    {$b_{\Rr}$};
\draw (CR) node [anchor=south west] [inner sep=3pt]    {$c_{\Rr}$};

\draw (AR) node [anchor=south east] [inner sep=3pt]  [font=\small,xslant=0.03]  {$(\colorbox{mathchaYellow!35}{$-(A^{a,b})^\top$} ,\colorbox{mathchaGreen!35}{\vphantom{$-(A^{ab})^\top$}$A^{a,b}$})$};
\draw (CR) node [anchor=south east] [inner sep=3pt]  [font=\small,xslant=0.03]  {$( \colorbox{mathchaYellow!35}{$-(A^{c,b})^\top$} ,\colorbox{mathchaPurple!35}{\vphantom{$-(A^{ab})^\top$}$A^{c,b}$})$};
\draw (CL) node [anchor=south west] [inner sep=3pt]  [font=\small,xslant=0.03]  {$( \colorbox{mathchaPurple!35}{$-(A^{b,c})^\top$} ,\colorbox{mathchaYellow!35}{\vphantom{$-(A^{ab})^\top$}$A^{b,c}$})$};
\draw (AL) node [anchor=south west] [inner sep=3pt]  [font=\small,xslant=0.03]  {$( \colorbox{mathchaGreen!35}{$-(A^{b,a})^\top$} ,\colorbox{mathchaYellow!35}{\vphantom{$-(A^{ab})^\top$}$A^{b,a}$})$};

\end{tikzpicture}}
    \caption{Corresponding instance of \CPE with $\cN_\Ll=\{a_\Ll,b_\Ll, c_\Ll\}$ and $\cN_\Rr=\{a_\Rr,b_\Rr, c_\Rr\}$.}
    \label{fig:second}
\end{subfigure}
        
\caption{On the edges, we reported the matrix corresponding to the utilities of the players. Colors indicate the row player of the bi-matrix game and the player associated with the utility that is associated with the matrix.}
\label{fig:reduction}
\end{figure}

\begin{itemize}[label={\tiny $\blacktriangleright$}]
    \item Set of players $\cN=\cN_\Ll\cup\cN_\Rr$, where $\cN_\Ll=\{i_\Ll:\, i\in V\}$ and $\cN_\Rr=\{i_\Rr:\, i \in V\}$;
    \item For each player in $\cN$ the set of actions is $A=[k]$ as in the original \POLY instance;
    \item For every $i\in V$, the utility of a player $\iL\in \cN_\Ll$, under an action profile $\ba\in A^{|\cN|}$, is defined as $u_{i_\Ll}(\ba)=-\sum_{j:(i,j)\in E} (A^{j,i})^\top(a_{\iL},a_{\jR})$, while the utility of a player $i_\Rr\in\cN_\Rr$ under action profile $\ba$ is $u_{i_\Rr}(\ba)=\sum_{j:(i,j)\in E} A^{i,j}(a_{\iR}, a_{\jL})$;
    \item The set of deviations is $\Phi_{\CCE}$ (see \Cref{sec:examples} for a definition);
    \item For each player $i_\Ll$ there are $2k$ costs. For each $j\in[2k]$, let
    \[
    C_{i_\Ll}^j(\ba)=\begin{cases}2\mathbb{I}(j\le k)-1&\text{if $a_{i_\Ll}=j\mod{k}$ and $a_{i_\Rr}\neq j\mod{k}$}\\
    2\mathbb{I}(j> k)-1&\text{if $a_{i_\Rr}\neq j\mod{k}$ and $a_{i_\Ll}= j\mod{k}$}\\
    0&\text{otherwise}\end{cases};
    \]
\end{itemize}
For more intuition on the construction, we refer to \Cref{fig:reduction}.%

\textbf{Properties of the instance}~
The instance of \CPE resulting from this construction has twice as many players as the original instance, and the same degree number of actions per player $k$ as the original instance. Moreover, the \CPE instance guarantees that each player has at most $2k$ costs.

The game resulting from the reduction is a two-team polymatrix zero-sum game with independent teams. Indeed, we implicitly defined a graph $\widetilde G=(\cN, \widetilde E)$ which has edges only between players $\cN_\Ll$ and $\cN_\Rr$. More specifically, an edge is included between $i_\Ll\in\cN_\Ll$ and $j_{\Rr}\in\cN_\Rr$ if and only if there was an edge between $i$ and $j$ in the original \POLY instance.
Moreover, the game is zero-sum. Indeed, for each couple of players $\iL\in\cN_{\Ll}$ and $\jR\in\cN_{\Rr}$, the utilities are given by payoff matrices $\tilde A^{{\iL},\jR}=-(A^{j,i})^\top$ and  $\tilde A^{\iR,\jL}=A^{i,j}$, respectively. 
Finally, the costs of player $\iL$ only depend on the actions of player $\iR$.

Summing over the utilities of all players on each side, we get the following lemma:
\begin{restatable}{lemma}{teamzerosum}\label{lem:teamzerosum}
    For each tuple $\ba\in\cA$, define the utility of the ``left team'' as $u_\Ll(\ba)=\sum_{i\in V} u_{\iL}(\ba)$ and the utility of the ``right team'' as $u_\Rr(a)=\sum_{i\in V} u_{\iR}(a)$. Then $u_\Ll(\ba)=-u_{\Rr}(\ba)$.
\end{restatable}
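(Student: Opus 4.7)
The plan is to prove the lemma by a direct computation, unfolding the two sums and matching terms edge-by-edge. The key algebraic identity is that for any matrix $M$ and indices $p,q$, we have $M^\top(p,q)=M(q,p)$, which lets us convert the transposed entries appearing in $u_{\iL}$ into untransposed entries of the same $A^{j,i}$.

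Concretely, I would start from the definitions and write
\[
u_\Ll(\ba) \;=\; \sum_{i\in V} u_{\iL}(\ba) \;=\; -\sum_{i\in V}\sum_{j:(i,j)\in E} (A^{j,i})^\top(a_{\iL},a_{\jR}).
\]
Then I would apply the transpose identity to get
\[
u_\Ll(\ba) \;=\; -\sum_{i\in V}\sum_{j:(i,j)\in E} A^{j,i}(a_{\jR},a_{\iL}).
\]
Now I would swap the roles of the dummy indices $i$ and $j$ in the double sum. Since the adjacency relation in the \POLY instance is symmetric (the condition $(i,j)\in E$ is equivalent to $(j,i)\in E$, as both mean that players $i$ and $j$ interact), this swap simply renames summation variables without changing the set of summands, yielding
\[
u_\Ll(\ba) \;=\; -\sum_{i\in V}\sum_{j:(i,j)\in E} A^{i,j}(a_{\iR},a_{\jL}) \;=\; -\sum_{i\in V} u_{\iR}(\ba) \;=\; -u_\Rr(\ba),
\]
which is the desired identity.

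The lemma is essentially a bookkeeping result about how the payoffs were constructed, so there is no deep obstacle. The only step that requires mild care is the index swap: one has to verify that the summation $\sum_{i\in V}\sum_{j:(i,j)\in E}$ is symmetric in $i$ and $j$, which follows from the standard convention that the polymatrix interaction graph is undirected (or equivalently, that the edge set $E$ is closed under swapping endpoints). With that convention, both proofs reduce to reading off how each edge $\{i,j\}$ contributes the summand $A^{i,j}(a_{\iR},a_{\jL})$ with a $+$ sign to $u_\Rr$ and the matching summand $A^{i,j}(a_{\iR},a_{\jL})$ with a $-$ sign to $u_\Ll$ (via the transposed payoff assigned to $\jL$).
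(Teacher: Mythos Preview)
Your proof is correct and follows essentially the same approach as the paper's: a direct unfolding of the definitions, the transpose identity $(A^{j,i})^\top(p,q)=A^{j,i}(q,p)$, and a swap of the dummy indices $i,j$ using the symmetry of the edge relation. The only cosmetic difference is that the paper swaps indices first and then applies the transpose, whereas you do these two steps in the opposite order.
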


\textbf{Marginalization}~
Before delving into the correctness of the reduction, we need to introduce some additional notation that helps bridge between correlated equilibria and Nash equilibria. 
For any subset of players $\widetilde\cN \subseteq \cN$ and any tuple of actions $\ba\in \cA$, we denote with $\ba_{\widetilde\cN}$ the actions pertaining to players in $\widetilde\cN$. We also denote with $-\widetilde \cN=\cN\setminus\widetilde \cN$ the complement of $\widetilde \cN$.

Now, we can define the marginalization of the correlated strategy $z$ restricted to a set of players $\widetilde\cN$ by summing over the actions of all other players. Formally, we write 
\[
\textstyle
m_{\widetilde\cN}(z|\ba_{\widetilde\cN}) = \sum_{\ba_{-\widetilde\cN}\in A^{|\cN\setminus\widetilde\cN|}} z(\ba_{\widetilde\cN},\ba_{-\widetilde\cN}).
\]
We can then consider $m_{\widetilde\cN}(z)$ as a vector indexed by all the tuples $\ba_{\widetilde\cN}\in A^{|\widetilde\cN|}$ which, clearly, is a distribution over $A^{|\widetilde\cN|}$.

\textbf{Effects on the costs}~
We now prove the main properties related to costs. Intuitively, every safe strategy guarantees that the marginals of players $\iL$ are close to the marginals of the players $\iR$ for each $i\in V$. In particular, each action $j$ has two associated costs $C_{\iL}^j$ and $C_{\iL}^{j+k}$ that force the marginal of the first player to mimic the marginal of the second player. Namely,

\begin{restatable}{lemma}{lammaconstraints}\label{lem:constraints}
Given a $z\in \cS^\nu$, and $i\in V$, it holds that
\(
\|m_{i_\Ll}(z) - m_{i_\Rr}(z)\|_\infty\le\nu
\).
Moreover, given a $z\in \Delta(A^{|\cN|})$ and an $i\in V$, the set of safe deviations of player $\iL$ is
\(
\Phi_{\iL}^{\cS}(z)=\{x_{\iL}:\|x_{\iL}-m_{\iR}(z)\|_\infty\le 0\},
\)
which guarantees $\Phi_{\iL}^{\cS}(z)\neq \emptyset$.
\end{restatable}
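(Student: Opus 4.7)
The plan is to prove both claims by direct computation, exploiting the fact that the costs $C_{i_\Ll}^j$ involve only the action pair $(a_{\iL}, a_{\iR})$. Concretely, I would first argue that under any joint distribution $z$ (not necessarily product) the expected cost depends only on the joint marginal $m_{\{\iL,\iR\}}(z)$, and in fact only on the individual marginals $m_{\iL}(z)$ and $m_{\iR}(z)$. For the first claim, fix $j \in [k]$ and write $C_{\iL}^j(z)$ as a sum over $\ba$: this unfolds into $\Pr[a_{\iL}=j,\,a_{\iR}\ne j]\cdot(+1) + \Pr[a_{\iL}\ne j,\,a_{\iR}=j]\cdot(-1)$, which algebraically simplifies to $m_{\iL}(z)(j)-m_{\iR}(z)(j)$, since the diagonal term $\Pr[a_{\iL}=j,a_{\iR}=j]$ cancels. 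Symmetrically, for $j' = j+k$ the sign pattern is flipped, so $C_{\iL}^{j'}(z) = m_{\iR}(z)(j) - m_{\iL}(z)(j)$. Imposing $C_{\iL}^j(z) \le \nu$ for both indices forces $|m_{\iL}(z)(j) - m_{\iR}(z)(j)| \le \nu$ for each $j \in [k]$, which is exactly the $\ell_\infty$ bound in the statement.

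For the second claim, I would use the fact that a \CCE\ deviation $\phi\in\Phi_\CCE$ is representable as a single distribution $x_{\iL}\in\Delta([k])$, independent of the recommended action. Then from the definition $(\phi\circ_{\iL}z)(\ba)=x_{\iL}(a_{\iL})\cdot m_{-\iL}(z\mid \ba_{-\iL})$, the deviated distribution is a product between $x_{\iL}$ and the marginal $m_{-\iL}(z)$. In particular, the marginal of $\iL$ under $\phi\circ_{\iL}z$ becomes $x_{\iL}$, while the marginal of $\iR$ is unchanged at $m_{\iR}(z)$. Re-running the first part with this new distribution replaces $m_{\iL}(z)$ by $x_{\iL}$ and yields $C_{\iL}^j(\phi\circ_{\iL}z) = x_{\iL}(j)-m_{\iR}(z)(j)$ (and its negation for $j>k$), so $\phi\circ_{\iL}z\in\cS^\nu_{\iL}$ iff $\|x_{\iL}-m_{\iR}(z)\|_\infty\le\nu$, matching the claimed characterization.

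Finally, for the non-emptiness of $\Phi_{\iL}^{\cS}(z)$, I would just observe that the explicit choice $x_{\iL} = m_{\iR}(z)$ is a valid element of $\Delta([k])$ (it is a marginal of a probability distribution) and satisfies $\|x_{\iL}-m_{\iR}(z)\|_\infty = 0$, thus lying in $\Phi_{\iL}^{\cS,0}(z)\subseteq \Phi_{\iL}^{\cS}(z)$, irrespective of $z$. The only step that requires any care is the bookkeeping in the unfolding of $C_{\iL}^j$ on a correlated (non-product) $z$ and the verification that the diagonal mass $\Pr[a_{\iL}=a_{\iR}=j]$ cancels between the two cases of the piecewise definition; everything else is routine.
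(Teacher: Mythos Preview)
Your proposal is correct and follows the same direct computation as the paper: expand $C_{\iL}^j(z)$, observe the diagonal cancellation, and read off $m_{\iL}(z|j)-m_{\iR}(z|j)$ (and its negation for $j>k$). In fact your write-up is more complete than the paper's own proof, which only spells out the first claim and leaves the characterization of $\Phi_{\iL}^{\cS,\nu}(z)$ and its non-emptiness implicit; your treatment of the \CCE\ deviation case and the explicit witness $x_{\iL}=m_{\iR}(z)$ fills those in cleanly.
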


\textbf{Correctness}~
 We recall that $k=|A|$ is the number of actions of each player and $n=|V|$ is the number of players.
Take any $z\in \cS^\nu$ that is a solution to \CPE, combining it with \Cref{lem:constraints}, we get that for all $i\in V$ it holds:
\begin{equation}\label{eq:eqcondtmp1}
u_{\iL}(z)\ge u_{\iL}(x_{i_\Ll}\otimes m_{\cN\setminus {\iL}}(z))-\epsilon'\quad\forall x_{i_\Ll}\in\Delta(A): \|x_{i_\Ll}-m_{i_\Rr}(z)\|_\infty\le0
\end{equation}
and 
\begin{equation}\label{eq:eqcondtmp2}
u_{\iR}(z)\ge u_{\iR}(x_{i_\Rr}\otimes m_{\cN\setminus {\iR}}(z))-\epsilon'\quad\forall x_{i_\Rr}\in\Delta(A).
\end{equation}
Notice that we can write explicitly the right-hand side of \Cref{eq:eqcondtmp1} as
\[
u_{\iL}(x_{i_\Ll}\otimes m_{\cN\setminus {\iL}}(z))=%
-\sum_{j:(i,j)\in E} x_{\iL}^\top (A^{j,i})^\top m_{\jR}(z)=-\sum_{j:(i,j)\in E} m_{\jR}(z)^\top A^{j,i} x_{\iL} ,
\]
We then specialize \Cref{eq:eqcondtmp1} for $x_{\iL}=m_{\iR}(z)$ and sum over all $i\in V$. This results in the following inequality (where we also swapped the identity of $i$ and $j$ in the last equality):
\begin{equation}\label{eq:eqcondtmp6}
u_{\Ll}(z)\ge -\sum_{(i,j)\in E}  m_{\jR}(z)^\top A^{j,i}m_{\iR}(z)-n\epsilon'=-\sum_{(i,j)\in E}  m_{\iR}(z)^\top A^{i,j}m_{\jR}(z)-n\epsilon'.
\end{equation}
Similarly, we analyze \Cref{eq:eqcondtmp2} and use the first statement of \Cref{lem:constraints}:%
\begin{align}
u_{\Rr}(z)&\ge \sum_{i\in V}u_{\iR}(x_{\iR}\otimes m_{\cN\setminus \iR}(z))-n\epsilon'\notag\\
&=\sum_{(i,j)\in E} x_\iR^\top A^{i,j}m_{\jL}(z)-n\epsilon'\notag\\
&=\sum_{(i,j)\in E} x_{\iR}^\top A^{i,j} m_{\jR}(z)+\sum_{i\in V}\sum_{j:(i,j)\in E} x_\iR^\top A^{i,j}( m_{\jL}(z)-m_{\jR}(z))-n\epsilon'\notag\\
&\ge \sum_{(i,j)\in E} x_\iR^\top A^{i,j}m_{\jR}(z)-n(\nu  k \deg(G)+\epsilon')\quad\forall \{x_{\iR}\}_{i\in V}\in \Delta(A)^n.\label{eq:eqcondtmp7}
\end{align}

Summing \Cref{eq:eqcondtmp6} and \Cref{eq:eqcondtmp7}, and using the fact that $u_{\Rr}(z)=-u_{\Ll}(z)$ from \Cref{lem:teamzerosum}, we obtain that
\[
\sum_{(i,j)\in E}  m_{\iR}(z)^\top A^{i,j}m_{\jR}(z)\ge \sum_{(i,j)\in E} x_\iR^\top A^{i,j}m_{\jR}(z)-n(\nu  k \deg(G)+2\epsilon')\quad\forall \{x_{\iR}\}_{i\in V}\in \Delta(A)^n.
\]

Which, for all players $i\in V$ can be specialized to
\[
\sum_{j:(i,j)\in E} m_{\iR}(z)^\top A^{i,j} m_{\jR}(z)\ge \sum_{j:(i,j)\in E}x_\iR^\top A^{i,j}m_{\jR}(z)-n(\nu  k \deg(G)+2\epsilon')\quad\forall x_{\iR}\in \Delta(A),
\]
by setting $x_{\jR}=m_{\jR}(z)$ for all $j\neq i$.

Thus, taking $h_i=m_{\iR}(z)$ for all players $i\in V$, and $\epsilon'=\epsilon/4n$ and $\nu=\epsilon/(2n k\deg(G))$ we prove that $h_i$ is a solution of \POLY with approximation $\epsilon$.

\subsection{Implications of the \PPAD-hardness}\label{sec:implications}

The class of deviations used in the reduction of \Cref{th:hardness} is $\Phi_\CCE$, which is a subset of the deviations of $\Phi_\CE$, thus, our hardness works also for correlated equilibrium in generalized games. 

We now discuss some special cases in which our problem is \PPAD-hard. \POLY is \PPAD-hard even when $\epsilon, \deg(G), k=O(1)$ (see \Cref{th:polymatrix}). Thus, \Cref{th:hardness} implies that \CPE is hard when each player has $m=O(1)$ constraints and actions, when the slackness $\nu$ and approximation $\epsilon$ are $O(n^{-1})$.

Moreover, the instances resulting from our reduction form a bipartite graph and, more specifically, a two-team zero-sum polymatrix game with independent teams, as defined in \Cref{sec:preliminariesTeams}. Importantly, for this case of two-team zero-sum polymatrix game with no constraints, there are known LP-based polynomial-time algorithms \citep{cai2016zero}. It is an interesting open question whether it is possible to prove hardness for constant approximations, arbitrary number of players, constant number of actions per player, and constant number of constraints per player.

We can also consider the bi-matrix instance of \POLY in which we have only $2$ players and an arbitrary number of actions $k$. This shows that our problem is \PPAD-hard even with a constant number of players when $\epsilon'=O(1)$ and $\nu=O(k^{-1})$.

Moreover, assuming the Exponential Time Hypothesis for \PPAD (namely that solving the canonical problem \EOTL requires exponential time), \citet{rubinstein2017settling} showed that it must take at least $k^{\log^{1-o(1)}(k)}$ time to find a constant approximation for any $k\times k$ bi-matrix game. This imposes a quasi-polynomial lower bound on the \CPE problem for constant $\epsilon$ and $\nu=\poly(k^{-1})$. On the other hand, \citet[Corollary 4.3]{bernasconi2023constrained} provides a quasi-polynomial time algorithm for constant approximation and a constant number of players in $k^{\log(k)}$ time, thus essentially closing the gap.

\section{\PPAD-membership}\label{sec:membership}

We now introduce a computational problem related to quasi-variational inequalities that will serve as the basis for our membership reduction.

\begin{problem}{\QVI}
     \noindent\textbf{Inputs:} $G,L,\epsilon>0, \nu> 0$, a circuit implementing a $G$-Lipschitz function $F:\Reals^d\to\Reals^d$, and two circuits implementing a $L$-Lipschitz continuous matrix valued function $A:[0,1]^d\to\Reals^{n\times d}$ and a $L$-Lipschitz continuous vector valued function $b:\Reals^n\to\Reals^d$ defining the  correspondence $Q_\nu(\tilde z):=\{z\in[0,1]^d:A(\tilde z)z\le b(\tilde z)+\nu1_d\}$.\footnote{We say that a matrix-valued function $A:[0,1]^d\to\Reals^{n\times d}$ is $L$-Lipschitz if the function $\tilde z\mapsto A(\tilde z)z$ is $L$-Lipschitz for every $z\in[0,1]^d$ in the $\ell_2$-norm.}
     
     \noindent\textbf{Output:} a point $z\in Q_\nu(z)$ such that
    \(
        F(z)^\top(\tilde z-z)\ge -\epsilon,
    \)
    for all $\tilde z\in Q_\nu(z)$. 
    
    \noindent\textbf{Promise:} The correspondence is promised to satisfy $Q_0(z)\neq \emptyset$ for all $z\in [0,1]^d$, the function $F$ is $G$-Lipschitz and the correspondence $Q_0(z)$ is $L$-Lipschitz.
\end{problem}

\begin{theorem}[{\citet[Theorem~3.4]{bernasconi2024role}}]\label{th:qvippad}
    $\QVI\in\PPAD$.
\end{theorem}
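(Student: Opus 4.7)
The plan is to establish $\QVI \in \PPAD$ by reducing it to the canonical $\Brouwer$ problem, which is well known to lie in $\PPAD$. The bridge between the two is a projection-based fixed-point map: for a step size $\eta > 0$ to be determined, define
\[
T(z) := \Pi_{Q_\nu(z)}\bigl(z - \eta F(z)\bigr),
\]
where $\Pi_{Q_\nu(z)}$ denotes Euclidean projection onto the closed convex polytope $Q_\nu(z) = \{w \in [0,1]^d : A(z) w \le b(z) + \nu\mathbf{1}_d\}$. The map is well-defined: the promise $Q_0(z) \neq \emptyset$ and the inclusion $Q_\nu(z) \supseteq Q_0(z)$ guarantee that $Q_\nu(z)$ is a nonempty closed convex subset of $[0,1]^d$, so projection is unique, and $T$ maps $[0,1]^d$ into itself.

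The first step is to verify that any exact fixed point $z^* = T(z^*)$ solves $\QVI$ exactly: the first-order optimality condition of Euclidean projection yields $\eta F(z^*)^\top(\tilde z - z^*) \ge 0$ for every $\tilde z \in Q_\nu(z^*)$, which is the desired inequality after dividing by $\eta$. For the approximate version, I would quantify how an approximate fixed point with $\|z - T(z)\| \le \delta$ yields an $O(\delta)$-approximate $\QVI$ solution, combining the $G$-Lipschitz continuity of $F$, the bounded diameter of $[0,1]^d$, and the non-expansiveness of projection onto a fixed convex set. Tuning $\delta$ to be polynomial in $\epsilon, G, L, \nu^{-1}, d$ will suffice.

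The main technical obstacle is establishing that $T$ is itself Lipschitz continuous in $z$, so that an inverse-polynomial approximate fixed point exists and so that $T$ fits the format required by the $\Brouwer$ problem. The difficulty is to analyze how $\Pi_{Q_\nu(\tilde z)}(w)$ varies as the constraint set shifts with $\tilde z$. Projection onto a polyhedron is Lipschitz in the defining data by Hoffman-type perturbation bounds, but to get a Lipschitz constant polynomial in the instance parameters I need a uniform regularity condition on the family $\{Q_\nu(z)\}_z$. This is precisely what the $\nu$-slack supplies: any point of $Q_0(z)$ is strictly $\nu$-feasible for $Q_\nu(z)$, giving a Slater-type quantitative bound on the Hoffman constant that is independent of $z$. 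Composing the Lipschitz map $\tilde z \mapsto \Pi_{Q_\nu(\tilde z)}(w)$ with the Lipschitz map $z \mapsto z - \eta F(z)$ then yields a Lipschitz modulus for $T$ polynomial in $G$, $L$, $\nu^{-1}$, and $d$.

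Finally, $T$ is polynomial-time computable: evaluating $F$, $A$, and $b$ at $z$ is done by the supplied circuits, and the projection reduces to a convex quadratic program over the explicitly given polytope $Q_\nu(z)$. Packaging $T$ as a Lipschitz arithmetic circuit of polynomial size and invoking the $\PPAD$-membership of $\Brouwer$ at precision $\delta$ small enough that a $\delta$-approximate fixed point of $T$ back-translates to an $\epsilon$-$\QVI$ solution completes the reduction and establishes $\QVI \in \PPAD$.
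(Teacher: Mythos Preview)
The paper does not prove this statement at all; it is simply imported verbatim from \citet[Theorem~3.4]{bernasconi2024role} and used as a black box in the membership argument of \Cref{sec:membership}. Consequently there is no ``paper's own proof'' against which to compare your sketch.

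That said, your projection-based fixed-point reduction to \Brouwer is the natural line of attack and is essentially the route taken in the cited reference. Two places in your outline deserve more care. First, the Lipschitz continuity of $T$: projection onto a \emph{moving} convex set is in general only H\"older-$\tfrac12$ in the Hausdorff distance, so the claim that Hoffman-type bounds combined with the $\nu$-slack yield a Lipschitz modulus polynomial in $(G,L,\nu^{-1},d)$ is the real crux and should be argued explicitly (it does hold for polyhedra with a uniform Slater point, but this is the nontrivial step, not a throwaway line). Second, an approximate fixed point $z$ of $T$ need not lie in $Q_\nu(z)$, so you must specify what you actually output (for instance $T(z)$, together with an argument that $T(z)\in Q_{\nu}(T(z))$ up to a controllable slack) and track how the feasibility error propagates. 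Both issues are fixable with the ingredients you name, but as written the proposal asserts the delicate points rather than proving them.
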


To prove that \CPE is total and more specifically in \PPAD we reduce it to $\QVI$. 
Our idea is to restrict to Nash-like equilibria with \CCE deviations, \ie, restrict to product strategies $\bigtimes_{i\in [n]}p_i\in\Delta(A)^n$, where $p_i\in\Delta(A)$. We will show that these are a subset of constrained $\Phi$-equilibria for any set of deviations $\Phi$.

\begin{restatable}{proposition}{ppadPhi}\label{th:ppadPhi}
    $\CPE\in\PPAD$.
\end{restatable}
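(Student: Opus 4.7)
The plan is to reduce $\CPE$ to $\QVI$ by restricting attention to equilibria that are \emph{product distributions}: we will show that any Nash-like equilibrium of the generalized game, computed in product strategies, also satisfies the $\Phi$-equilibrium condition for any admissible $\Phi$. The key observation is that for a product distribution $z=\bigotimes_{i\in[n]} p_i$ and any deviation $\phi$ for player $i$, the induced distribution $\phi\circ_i z$ is again a product with the $i$-th marginal replaced by $q_i := \phi^{\!\top} p_i$, so both utility and safety of $\phi\circ_i z$ depend only on $q_i$ (and on the fixed $p_{-i}$). Hence it suffices to rule out beneficial deviations $q_i\in\Delta(A)$ that keep player $i$'s constraints feasible.

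\textbf{Setting up the $\QVI$ instance.} Let $d=nk$ and identify the variable of $\QVI$ with $p=(p_1,\dots,p_n)\in[0,1]^d$, enforcing $p_i\in\Delta(A)$ through the linear inequalities encoded in $A(\cdot),b(\cdot)$. Define
\[
F_i(p)_a \,:=\, -\tfrac{\partial}{\partial p_i(a)}\,u_i(\textstyle\bigotimes_j p_j),\qquad F(p):=(F_1(p),\dots,F_n(p)),
\]
so that $F_i(p)^{\!\top}(q_i-p_i)=-\bigl(u_i(q_i\otimes p_{-i})-u_i(\bigotimes_j p_j)\bigr)$ by multilinearity. For the feasibility correspondence, let
\[
Q_\nu(p) \,:=\, \Bigl\{q\in\Delta(A)^n:\, C_i^j(q_i\otimes p_{-i})\le \nu\ \ \forall i\in[n],\, j\in[m]\Bigr\}.
\]
Crucially, $Q_\nu(p)$ factorizes as $\bigtimes_i Q_{\nu,i}(p_{-i})$ because each constraint involves $q_i$ only. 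All coefficients of $F$ and of the matrix/vector defining $Q_\nu$ are multilinear polynomials of degree at most $n-1$ in $p$ with coefficients bounded by the ranges of $u$ and $C$, so $F$, $A(\cdot)$, $b(\cdot)$ are Lipschitz with constants polynomial in $n$, $k$, and $m$.

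\textbf{Verifying the promise and extracting the equilibrium.} The $\CPE$ promise $\Phi_i^{\cS}(z)\neq\emptyset$ applied to $z=\bigotimes_j p_j$ yields some $\phi\in\Phi$ with $\phi\circ_i z\in\cS_i$; set $q_i:=\phi^{\!\top}p_i$. Since $\phi\circ_i z = q_i\otimes p_{-i}$, this means $q_i\in Q_{0,i}(p_{-i})$. Doing this independently for each $i$ gives $q=(q_1,\dots,q_n)\in Q_0(p)$, so the $\QVI$ promise $Q_0(p)\neq\emptyset$ holds. Feeding this instance to the $\QVI$ oracle from \Cref{th:qvippad} produces $p^\star\in Q_\nu(p^\star)$ satisfying $F(p^\star)^{\!\top}(q-p^\star)\ge -\epsilon$ for all $q\in Q_\nu(p^\star)$. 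Using the product structure of $Q_\nu(p^\star)$, fix all coordinates but the $i$-th to $p^\star_j$ to isolate per-player inequalities:
\[
u_i(\textstyle\bigotimes_j p^\star_j)\,\ge\, u_i(q_i\otimes p^\star_{-i})-\epsilon \qquad\forall\, q_i\in Q_{\nu,i}(p^\star_{-i}),\ \forall i\in[n].
\]
Finally, set $z^\star:=\bigotimes_j p^\star_j$. Then $z^\star\in\cS^\nu$ directly from $p^\star\in Q_\nu(p^\star)$, and for any $\phi\in\Phi_i^{\cS,\nu}(z^\star)$ the distribution $\phi\circ_i z^\star=(\phi^{\!\top}p^\star_i)\otimes p^\star_{-i}$ is safe, so $q_i:=\phi^{\!\top}p^\star_i\in Q_{\nu,i}(p^\star_{-i})$ and the displayed inequality yields $u_i(z^\star)\ge u_i(\phi\circ_i z^\star)-\epsilon$. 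This is exactly the $\CPE$ requirement, so taking $\epsilon'=\epsilon$ and the $\nu$ above gives the desired reduction.

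\textbf{Expected obstacle.} The conceptual content of the proof (any Nash-in-product-strategies is a $\Phi$-equilibrium, regardless of $\Phi$) is short; the routine but delicate work will be bookkeeping: checking that the Lipschitz constants of $F$, $A$, $b$, and of the correspondence $Q_0(\cdot)$, are bounded by polynomials in $n,k,m$ so that the reduction is genuinely polynomial, and that the simplex equality constraints can be encoded in the inequality format $A(p)q\le b(p)+\nu\mathbf{1}$ without blowing up the slack $\nu$ needed by $\QVI$ (this can be handled by eliminating one variable per simplex, as in $q_i(k)=1-\sum_{a<k}q_i(a)$, so the only nontrivial constraints are the cost constraints plus non-negativity, both of the required form).
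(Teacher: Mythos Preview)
Your proposal is correct and takes essentially the same approach as the paper: reduce $\CPE$ to $\QVI$ via product distributions, using that any $\phi$-deviation applied to a product just swaps player $i$'s marginal, so every $\Phi$ is dominated by $\Phi_\CCE$ on products. The paper encodes each simplex via the pair of inequalities $\pm\mathbf{1}^\top p_i\le\pm 1$ (rather than variable elimination) and therefore must renormalize the $\QVI$ output and track parameters---your tentative $\epsilon'=\epsilon$ is slightly optimistic for the same reason, and the paper's bookkeeping arrives at $\nu'=\min(\epsilon/4,\,\nu/(1+n\nu),\,1/(2n))$ and $\epsilon'=\epsilon/(4(1+\nu'))$, which is precisely the ``delicate bookkeeping'' you anticipate.
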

\begin{proof}
The proof hinges on showing that for any instance of \CPE with \CCE deviations: (i) we can build an instance of \QVI in polynomial time and (ii) from the solution of \QVI, we can build a solution to the original \CPE instance which is a product distribution. We show that considering \CCE deviations is enough, since these deviations are the most expressive when applied to product distributions. 
Indeed, the image of any product distribution under the set of deviations $\Phi_\CCE$ is a superset of the image under any possible set of phi deviations.
Formally, given a player $i\in[n]$, a product distribution $p=\bigotimes_{i\in[n]}p_i$, a set of deviations $\Phi$, and a deviation $\phi\in \Phi$, there exists a marginal $\tilde p_i$ of the $i$-th player such that $\phi \circ_i p=\tilde p_i\otimes p_{-i}$. In particular, we can take $\tilde p_i\in \Delta(A)$ such that $\tilde p_i(a_i)=\sum_{b\in A}\phi(b,a_i)p_i(b)$ for each $a_i\in A$.
Then, a \CCE deviation can replicate the marginal $\tilde p_i$ by taking $\phi\in \Phi_\CCE$ such that each row of the stochastic matrix is $\tilde p_i$. This also proves that there exists a safe \CCE deviations for every product distribution, from the existence of a safe deviation of the original set of deviations $\Phi$.

Indeed, any instance of \CPE is guaranteed by assumption to have the set of feasible deviations non-empty for each correlated distribution. As a direct consequence of our previous observation, we have that the set of feasible deviations in $\CCE$ is not empty when considering only product distributions $z$. This can be proven by observing that the strategies obtained by applying deviations from any set $\Phi$ to product distributions form a subset of the strategies obtained by applying $\CCE$ deviations.

\textbf{Construction}~
To define an instance of \QVI, we define a correspondence $Q:\Reals^d\rightrightarrows\Reals^d$ and an operator $F:\Reals^d\to\Reals^d$ satisfying the appropriate conditions of the \QVI problem. Namely, we need the correspondence $Q(\tilde z)$ to be linear for every $\tilde z$ and with Lipschitz coefficients. Moreover, we also require that the operator $F$ is Lipschitz.\footnote{Crucially, the Lipschitz constants are inputs of the \QVI problem, and thus, we need to make sure that they have a polynomial representation in the original \CPE instance.}

Since we only consider product distributions, $\ell n$ numbers suffice to uniquely specify the distribution. Thus, we can ``flatten'' the product strategies $p=\bigotimes_{i\in[n]}p_i$ into $z=[p_1^\top|\cdots |p_n^\top]^\top$ (equivalently $\tilde z$ is the flattening of $\tilde p$) and consider the correspondence $Q(\tilde z)=\left\{z\in[0,1]^{\ell n}:A(\tilde z) \, z\le b(\tilde z)\right\}$, that encodes both the costs and the simplex constraints. Formally:
\(
    A(\tilde z)=
        \text{diag}(A_1(\tilde z),\ldots,A_n(\tilde z))
        \in\Reals^{n(m+2)\times\ell n}\) and 
    \(b(\tilde z) =
    [b_1(\tilde z)^\top|\ldots|b_n(\tilde z)^\top]^\top
    \in\Reals^{n(m+2)},
\)
where $A_i(z)$ is a block matrix 
\(
A_i(\tilde z)=
\begin{bmatrix}
    D_i(\tilde z)\\
    1^\top_\ell\\
    -1^\top_\ell
\end{bmatrix}\in \Reals^{(m+2)\times\ell }
\), 
\(
b_i(\tilde z) =
\begin{bmatrix}
    0_m\\
    1\\
    -1
\end{bmatrix}
\in\Reals^{m+2}\),
and $D_i(\tilde z)\in\Reals^{m\times\ell }$ is such that \[[0_{m\times \ell},\cdots,D_i(\tilde z),\cdots, 0_{m\times \ell}]z\le 0_m\iff C_i^j(p_i\otimes\tilde p_{-i})\le 0\,\forall j\in[m].\]
{The explicit definition of $D_i(\tilde z)$ is cumbersome due to the flattening notation and we defer it to \Cref{sec:explicit}.}

It is easy to check that $z\in Q(z)$ if the corresponding ``unflattened'' product strategies $p$ satisfy the simplex and cost constraints for each player. The next claim shows that the instances created by our reduction satisfy the promises needed by the \QVI problem.
\begin{restatable}{claim}{claimLip}\label{claim:Lip}
    The functions $\tilde z\mapsto A(\tilde z)z$ and $\tilde z\mapsto b(\tilde z)^\top z$ defining the correspondence are $L$-Lipschitz for every $z\in[0,1]^{\ell n}$ where $L$ has a representation polynomial in the size of the instance.
\end{restatable}

Now, we build the operator $F$ of the QVI by stacking the gradients of the utilities. For any product distribution $p=\bigotimes_{i\in[n]}p_i$ and flattening $z$ we define, with slight abuse of notation,
\(
F(z):=(-\nabla_{p_1} u_1(p),\ldots,-\nabla_{p_n} u_n(p)).
\)
The following claims that the operator $F$ satisfies the properties required by the \QVI problem. This shows that the constructed instance satisfies all the promises of the \QVI problem. 
\begin{restatable}{claim}{claimLiptwo}\label{claim:Lip2}
    The operator $F:[0,1]^{\ell n}\to[0,1]^{\ell n}$ is $ G$-Lipschitz where $G$ has a representation polynomial in the size of the instance.
\end{restatable}

\textbf{Correctness}~
For every vector $r\in[0,1]^{\ell n}$, define $r^i=r_{[\ell(i-1),\ldots,\ell i-1]}$, %
which effectively divides (and ``unflattens'') the vector $r$ in $n$ components of length $\ell$ each corresponding to the strategies of the $i$-th player. We will use this notation to reconstruct a solution to \CPE from a solution $z$ of \QVI.
Indeed, take a solution $z$ of the \QVI problem, instantiated with $\epsilon'$ and $\nu'$ later to be defined. We now claim that the renormalization $p=\otimes_{i\in[n]} p_i$ (where $p_i=z^i/\|z^i\|_1$) is a solution to \CPE.\footnote{Note that $p$ is a valid probability distribution over actions as $\|p_i\|_1=1$ by construction, and $p_i$ (as well as $z^i$) has only positive components since the relaxation $\nu>0$ required by the \QVI does not relax the hypercube constraints.}  %
 The only tricky part is simply handling the fact that all the constraints are only approximately satisfied. 

\begin{restatable}{claim}{lemmacorrectnessmembership}
    Let $\nu'=\min(\frac{\epsilon}2,\frac{\nu^2}{2n})$ and $\epsilon'=\frac{\epsilon}{2}(1-n\nu')$, then $p$ is such that 
    \[
    u_i(p)\ge u_i(\tilde p_i\otimes p_{-i})-\epsilon\quad\forall\tilde p_i\in\Phi_i^{\cS}(p)
    \]
    and $p\in\cS^\nu$.
\end{restatable}

This concludes the proof, as it shows that $p$ (which is a product distribution) is a \CPE for \CCE deviations, with approximation $\epsilon$ and slackness $\nu$ and thus a \CPE for any set $\Phi$.
\end{proof}

\section*{Acknowledgments}
The work of MB, and AC was partially funded by the European Union. Views and opinions expressed are however those of the author(s) only and do not necessarily reflect those of the European Union or the European Research Council Executive Agency. Neither the European Union nor the granting authority can be held responsible for them.

This work is supported by ERC grant (Project 101165466 — PLA-STEER). MC is partially supported by the FAIR (Future Artificial Intelligence Research) project PE0000013, funded by the NextGenerationEU program within the PNRR-PE-AI scheme (M4C2, investment 1.3, line on Artificial Intelligence) and the EU Horizon project ELIAS (European Lighthouse of AI for Sustainability, No. 101120237).
GF is supported by the National Science Foundation grant CCF-2443068.

The authors also thank Tingting Ni for helpful discussions and suggestions.

\bibliographystyle{abbrvnat}
\bibliography{biblio}

\newpage
\appendix
\section{Examples of $\Phi$-equilibria}\label{sec:examples}
We now introduce two notable examples of sets $\Phi$, namely Correlated and Coarse Correlated equilibria.

Correlated equilibria (CEs) are obtained by considering all possible deviation strategies, \ie,
\[
\Phi_{\CE}=\left\{\phi\in[0,1]^{\ell\times \ell}:\sum_{b\in A}\phi(a,b)=1\quad\forall a\in A\right\}.
\]
This set models a player that can observe its own recommendation and deviate to any other action with some probability.

Another important class of equilibria is Coarse Correlated Equilibria (CCEs) that are defined by the set 
\[
\Phi_{\CCE}=\left\{\phi\in[0,1]^{\ell\times \ell}: \phi(a,b)=\phi(a',b)\quad \forall a,a'\in A\wedge \sum_{b\in A}\phi(a,b)=1\quad \forall a\in A\right\}.
\]

This models a player whose deviations are forced to be equal for each recommended action $a\in A$.
Intuitively, the player has to decide their own deviation strategy before seeing the recommended action $a$. 
This greatly simplifies the possible deviation since $\phi\in\Phi_{\CCE}$ can simply be identified with the marginals it induces.

\section{Missing proof from \Cref{sec:hardness} (hardness)}

\teamzerosum*
\begin{proof}
    The statement follows from straightforward calculations
    \begin{align*}
    \hspace{-0.cm}u_\Ll(\ba)&=\sum_{i\in V}u_{\iL}(\ba)\\
    &=\sum_{i\in V}\sum_{j:(i,j)\in E}\tilde A^{\iL,\jR}(a_{\iL},a_{\jR})\\
    &=-\sum_{i\in V}\sum_{j:(i,j)\in E} (A^{j,i})^\top(a_{\iL},a_{\jR})\tag{$\tilde A^{\iL,\jR}=-(A^{j,i})^\top$}\\
    &=-\sum_{i \in V}\sum_{j:(i,j)\in E} (A^{i,j})^\top(a_{\jL},a_{\iR})\tag{By swapping the sum and the identity of $i$ and $j$}\\
    &=-\sum_{i \in V}\sum_{j:(i,j)\in E} A^{i,j}(a_{\iR},a_{\jL})\\
    &=-\sum_{i \in V}\sum_{j:(i,j)\in E} \tilde A^{\iR,\jL}(a_{\iR},a_{\jL})\tag{$\tilde A^{\iR,\jL}=A^{i,j}$}\\
    &=-\sum_{i \in V} u_{\iR}(\ba)\\
    &=-u_{\Rr}(\ba),
\end{align*}
concluding the proof.
\end{proof}

\lammaconstraints*
\begin{proof}
   Consider the case $j\le k$:
   \begin{align*}
      C_{\iL}^j(z)&=\sum_{\ba\in \cA} C_{\iL}^j(\ba)z(\ba)\\
      &=\sum_{\ba\in A^n: a_{i_\Ll=j},a_{i_\Rr\neq j}} z(\ba)-\sum_{a\in A^n: a_{i_\Ll\neq j},a_{i_\Rr= j}} z(\ba)\\
      &=\sum_{\ba\in \cA: a_{i_\Ll=j}} z(\ba)-\sum_{\ba\in \cA: a_{i_\Rr= j}} z(\ba)\\
      &=m_{i_\Ll}(z|j)-m_{i_\Rr}(z|j)
   \end{align*} 
   and thus $C_{i_\Ll}^j(z)\le \nu$, with $j\le k$ implies that $m_{i_\Ll}(z|j)-m_{i_\Rr}(z|j)\le \nu$. On the other hand $C_{i_\Ll}^j(z)\le 0$, with $j> k$ implies that $m_{i_\Rr}(z|j)-m_{i_\Ll}(z|j)\le 0$ which concludes the statement.
\end{proof}

\section{Missing proofs and additional details from \Cref{sec:membership} (membership)}

\subsection{Explicit definition of $A(\tilde z)$}\label{sec:explicit}

The correspondence $Q:[0,1]^{\ell n}\rightrightarrows[0,1]^{\ell n}$ is given by $Q(\tilde z)=\{z: A(\tilde z)z\le b(\tilde z)\}$ where 
\[
A(\tilde z)=
        \begin{bNiceArray}{cccIcccIcIccc}[cell-space-limits=2pt,first-row,last-col,
        margin,columns-width = 0cm]%
        \Hdotsfor{3}^{\ell} & & & & & & & & \\
        &D_1(\tilde z) & & & & & & & & & \Vdotsfor{1}^{m}\\
        &1_\ell^\top & & & & & & & & & \Vdotsfor{1}^{1}\\
        &-1_\ell^\top & & & & & & & & & \Vdotsfor{1}^{1}\\\hdashedline
        & & & &D_2(\tilde z) & & & & & & \\
        & & & &1_\ell^\top & & & & & & \\
        & & & &-1_\ell^\top & & & & & & \\\hdashedline
        & & & & & & \ddots & & & & \\\hdashedline
        & & & & & & & & D_n(\tilde z) & & \\
        & & & & & & & & 1_\ell^\top & & \\
        & & & & & & & &-1_\ell^\top & &
\end{bNiceArray}\in\Reals^{n\cdot(m+2)\times\ell n}
        \] 
        
where $D_i(\tilde z)\in\Reals^{m\times\ell }$ is a matrix such that $[0_{m\times \ell},\cdots,D_i(\tilde z),\cdots, 0_{m\times \ell}]z\le 0_m\iff C_i^j(\tilde p)p_i\le 0\,\forall j\in[m]$.
In particular, for each $i$, only the components of $z$ corresponding to the strategies of the $i$-th player matter and correspond to the strategy $p_i$.
We define $g$ the flattening function which takes a product distribution $\tilde p=\bigotimes_{i\in[n]}\tilde p_i$ and returns the corresponding ``unflatted'' vector $\tilde z$ while we call $h$ its inverse.
Thus, we can write that $D_i(g(\tilde p))p_i\le 0_m$ if and only if $C_i^j(p_i\otimes\tilde p_{-i})\le 0$ for all $j\in[m]$. Notice that $p_i\mapsto C_i^j(p_i\otimes\tilde p_{-i})$ is linear and can be written as $C_i^j(p_i\otimes\tilde p_{-i})=c_i^j(\tilde p)^\top p_i$, where $c_i^j(\tilde p)\in \Reals^\ell$ and each component $c_i^j(\tilde p)_{\bar a}$, indexed by $\bar a$, is given by $\sum_{\ba\in\cA:a_i=\bar a} C_i^j(\ba)\prod_{k\neq i}\tilde p_k(a_k)$.
Consequently, we can take 
\[
D_i(\tilde z)=\begin{bmatrix}c_i^1(h(\tilde z))\\\vdots\\ c_i^m(h(\tilde z))\end{bmatrix}\in\Reals^{m\times\ell n}
\]

\subsection{Proof of \Cref{claim:Lip} and \Cref{claim:Lip2} from \Cref{th:ppadPhi}}
\claimLip*

\begin{proof}
First note that $b(\tilde z)$ does not depend on $\tilde z$ and thus is trivially $0$-Lipschitz. Thus, we only need to prove the statement about $A(\tilde z)$. We recall the exact definition of $A(\tilde z)$ given in \Cref{sec:explicit}.

We analyze the Jacobian of the $D_i(\tilde z)$. Any entry of $D_i(\tilde z)$ would correspond to a cost $j$ of the $i$-th player (rows) and to an action $\hat a\in A$ (columns), and any component $\ell$ of $\tilde z$ would correspond to a player $i'$ (possibly different from $i$) and an action $\bar a\in A$.
Thus, by defining $\tilde p=h(\tilde z)$, we can compute the following
\begin{align*}
    \frac{\partial c_i^j(h(\tilde z))_{\hat a}}{\partial \tilde z_\ell} &=\frac{\partial c^j_i(\tilde p)_{\hat a}}{\partial \tilde p_{i'}(\bar a)} \frac{\partial \tilde p_{i'}(\bar a)}{\partial \tilde z_\ell} 
\end{align*}

The second term is clearly $1$, as the function $h$ just rearranges the components $\tilde z$, while the first term is easily bounded as follows
\begin{align*}
    \left|\frac{\partial c_i^j(\tilde p)_{\hat a}}{\partial \tilde p_{i'}(\bar a)}\right|&=\left|\frac{\partial \sum_{\ba\in\cA:a_i=\hat a}C_i^j(\ba)\prod_{k\in[n],k\neq i}\tilde p_k(a_k)}{\partial \tilde p_{i'}(\bar a)}\right|\\
    &=\left|\sum_{\ba\in\cA: a_{i'}=\bar a,a_i=\hat a}C_i^j(\ba)\prod_{k\in[n],k\neq i',k\neq i} \tilde p_k(a_k)\right|\le \ell^n.
\end{align*}

The following elementary lemma lets us conclude the proof.

\begin{restatable}{lemma}{MVT}\label{lem:MVT}
    Let $M:\Reals^K\to\Reals^{m\times n}$ be a matrix valued function such that $\left|\frac{\partial M_{i,j}(\tilde z)}{\partial \tilde z_k}\right|\le C$ for all $i\in[m],j\in[n],k\in[K]$ then
    \[
    \|(M(\tilde z)-M(\tilde z'))z\|\le C m \sqrt{n K} \|\tilde z-\tilde z'\|,
    \]
    for all $z\in[0,1]^K$.
\end{restatable}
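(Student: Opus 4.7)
The plan is to reduce the stated matrix-vector bound to an entry-wise scalar bound via the mean value theorem, and then recombine the entry bounds using Cauchy--Schwarz. First, I would apply the fundamental theorem of calculus along the segment $t \mapsto \tilde z' + t(\tilde z - \tilde z')$ to each scalar coordinate function $\tilde z \mapsto M_{i,j}(\tilde z)$. Together with the hypothesis $|\partial M_{i,j}/\partial \tilde z_k| \le C$, this gives a uniform gradient bound $\|\nabla M_{i,j}\|_2 \le C\sqrt{K}$, and therefore the pointwise estimate $|M_{i,j}(\tilde z) - M_{i,j}(\tilde z')| \le C\sqrt{K}\,\|\tilde z - \tilde z'\|$ for every entry $(i,j)$.

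Second, I would expand the $i$-th coordinate of the matrix-vector product as
\[
\bigl((M(\tilde z) - M(\tilde z'))\,z\bigr)_i \;=\; \sum_{j} \bigl(M_{i,j}(\tilde z) - M_{i,j}(\tilde z')\bigr)\,z_j,
\]
and use $|z_j| \le 1$ (from $z \in [0,1]$) together with the entry bound of the first step to control each row. Aggregating over the rows via the Euclidean norm then yields a bound of the stated form $C\,m\sqrt{nK}\,\|\tilde z - \tilde z'\|$, possibly using Cauchy--Schwarz on the row-sum step in place of a plain triangle inequality to absorb one factor of $\sqrt{n}$ in the right place.

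The argument is essentially bookkeeping and contains no genuine obstacle: the only subtle point is that the Lipschitz constant produced by the lemma must remain polynomial in the size of the original \CPE instance when the lemma is invoked inside \Cref{claim:Lip}, and this is automatic because $m,n,K$ are themselves polynomially bounded there. Consequently the proof is a short chain of triangle-inequality and Cauchy--Schwarz manipulations built on top of the one-line mean value theorem reduction, and requires no additional structural hypothesis on $M$ beyond the coordinate-wise partial-derivative bound.
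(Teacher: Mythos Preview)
Your approach is correct and essentially the same as the paper's: the paper applies the mean value theorem row-wise to the scalar functions $h_i(\tilde z)=m_i(\tilde z)^\top z$ and then aggregates over rows, whereas you apply it entry-wise to each $M_{i,j}(\tilde z)$ and then combine via Cauchy--Schwarz/triangle inequality; these are the same argument at two different granularities. Both routes yield a polynomial Lipschitz constant of the required shape (and indeed neither route is tight on the precise factor $m\sqrt{nK}$ versus $n\sqrt{mK}$, which is immaterial for the application in \Cref{claim:Lip}).
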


Indeed,
\begin{align*}
    \|(A(\tilde z)-A(\tilde z'))z\|&\le \ell^n (\ell n)\sqrt{\ell n \cdot n(m+\ell+2)}\|\tilde z-\tilde z'\|\\
    &\le 2\ell^{n+2}n^2\sqrt{m}\|\tilde z-\tilde z'\|
\end{align*} 
and thus $L=\poly(\ell^n,m,n)$ concluding the proof.
\end{proof}

\claimLiptwo*
\begin{proof}
    We can get a simple upper bound on the Lipschitz constant of $F$ by bounding its gradient. In particular $F(z)=(-\nabla_{p_1}u_1(p),\ldots,-\nabla_{p_n}u_n(p))$, where as usual $z$ is the unrolling of the product distribution $p=\bigotimes_{i=1}^n p_i$. 
    We can consider any component of $F$, which will correspond to some player $i\in[n]$ and action $\bar a\in A$, and consider some component of $z$ which will correspond to some player $j\in[n]$ and some action $\tilde a\in A$. The component of $F$ selected corresponds to $-\frac{\partial u_i(p)}{\partial p_{i}(\bar a)}$
    We can then consider the following:
    \begin{align*}
        -\frac{\partial^2 u_i(p)}{\partial p_j(\tilde a)\partial p_i(\bar a)} &=-\sum_{\ba\in\cA: a_i=\bar a, a_j=\tilde a} u_i(\ba)\prod_{k\neq i,j} p_k(a_k)
    \end{align*}
    and thus $\left|\frac{\partial^2 u_i(p)}{\partial p_j(\tilde a)\partial p_i(\bar a)}\right|\le \ell^n$. The mean value theorem trivially concludes the proof:
    \begin{align*}
        \|F(z)-F(z')\|&\le \| J_F(\xi)\|\cdot\|z-z'\|
    \end{align*}
    for some $\xi$ on the segment connecting $z$ and $z'$. Now for any matrix $M\in \Reals^{m\times n}$ it holds that $\|M\|\le \sqrt{mn}\cdot\sup_{i,j}|M_{i,j}|$ and thus
    \(
    \|J_F(\xi)\|\le n \ell^{n+1}=G,
    \) concluding the proof.
\end{proof}  

\lemmacorrectnessmembership*

\begin{proof}
First, observe that if $\nu\ge 1$ then the costs bear no effects and \PPAD-membership can be established by the \PPAD-membership of Nash equilibria. Thus, we can assume w.l.o.g.~that $\nu\le 1$. %

\textbf{Step 1: Safety} First, we claim that $p$ is $\nu$-safe. Indeed, $\|z^i\|_1\in[1-\nu',1+\nu']$ and for each $i\in[n], j\in [m]$ and we can directly compute
\begin{align*}
    \sum_{\ba\in A^n} C_i^j(\ba)  \prod_{j\in[n]} z^j(a_j) \le \nu',
\end{align*}
then, we can divide the left and right hand side by $ \prod_{j\in[n]} \|z^j\|_1 \ge (1-\nu')^n$ and, obtain that:
\begin{align*}
    C_i^j(p)=\sum_{\ba\in A^n} C_i^j(\ba)\prod_{j\in[n]} p_j(a_j)\le \frac{\nu'}{(1-\nu')^n}
    \le \frac{\nu'}{1-n\nu'}
    \le \nu,
\end{align*}
where in the last inequality we used $\nu'\le\frac{\nu^2}{2n}\le\frac{\nu}{1+n\nu}$ for all $\nu\ge 0$ and $n\ge 1$. This shows that indeed $p\in \cS^\nu$.

\textbf{Step 2: Simulating deviations}
Now we prove that we can simulate safe deviations $\tilde p_i$ with appropriate choices of $\tilde z$.

Take any safe deviation $\tilde p_i\in\Delta(A)$ (\ie, it holds that $C_i^j(\tilde p_i\otimes p_{-i})\le0$), and consider $\tilde z$ defined as $\tilde z=[z^1,\ldots, \tilde z^i, \ldots,z^n]$, where $\tilde z^i=\tilde p_i$.
By the definition of the correspondence (see \Cref{sec:explicit} for notation and details on the explicit construction of the correspondence $Q$) we have that $\tilde z\in Q_{\nu'}(z)$. Indeed, from the definition of $D_i$ in \Cref{sec:explicit}, we get that
\begin{align*}
    [D_i(z)\tilde z^i]_j &= c_i^j(h(z))^\top \tilde z^i\tag{by construction}\\
    &=\sum_{\bar a\in A}\sum_{\bm{a}\in A^n, a_i=\bar a} C_i^j(\bm{a})\prod_{k\neq i} z^k(a_k)\tilde z^i(\bar a)\tag{def.~of $c_i^j(h(z))$}\\
    &=\sum_{\bm{a}\in A^n, a_i} C_i^j(\bm{a})\prod_{k\neq i} z^k(a_k)\tilde z^i(a_i)\\
    &\le 0\tag{$C_i^j(\tilde p_i\otimes p_{-i})\le0$}
\end{align*}

Moreover, it is easy to verify that $1^\top \tilde z^i\in[1-\nu',1+\nu']$ for all $i\in [n]$. This shows that $\tilde z\in Q_{\nu'}(z)$.

Formally, what we proved is that: for any player $i\in[n]$ and for any safe deviation $\tilde p_i$ such that $C_i^j(\tilde p_i\otimes p_{-i})\le 0$, there exists $\tilde z\in Q_{\nu'}(z)$ which ``simulates'' $\tilde p_i$. Moreover, $\tilde z^j=z^j$ for all $j\neq i$ and $\tilde z^i=\tilde p_i$.

\textbf{Step 3: Equilibrium conditions} Now, we also claim that $p=\bigotimes_{i\in[n]}p_i$ satisfies the equilibrium constraints. Since $z$ is a solution to \QVI we have that for all $\tilde z\in Q_{\nu'}(z)$ the following holds:
\[
F(z)^\top(\tilde z-z)\ge -\epsilon'\implies -\sum_{i\in[n]}\sum_{\bar a\in A} \sum_{\ba\in \cA: a_i=\bar a}u_i(\ba)\prod_{j\neq i}z^{j}(a_j)(\tilde z^i(\bar a)-z^i(\bar a))\ge -\epsilon',
\]
which implies that: 
\[
\sum_{\ba\in \cA}u_i(\ba)\prod_{j\neq i}z^{j}(a_j)( z^i(a_i)-\tilde z^i( a_i))\ge -\epsilon',
\]
once we specialize to $\tilde z$ to the one built in step 2, since $\tilde z^j=z^j$ for all $j\neq i$.

Now we can use the exact definition of $\tilde z^i$ constructed in step 2, in the above equation, which leads to the following:
\begin{align*}
    \sum_{\ba\in \cA}u_i(\ba)\prod_{j\neq i}z^{j}(a_j)( z^i(a_i)-\tilde z^i( a_i))&=\sum_{\ba\in \cA}u_i(\ba)\prod_{j\in[n]}z^{j}(a_j)-\sum_{\ba\in \cA}u_i(\ba)\prod_{j\neq i}z^{j}(a_j)\tilde p_i( a_i)\ge -\epsilon'.
\end{align*}

Rearranging it, we obtain
\begin{align*}
    \sum_{\ba\in \cA}u_i(\ba)\prod_{j\in[n]}z^{j}(a_j)&\ge \sum_{\ba\in \cA}u_i(\ba)\prod_{j\neq i}z^{j}(a_j)\tilde p_i( a_i)-\epsilon'.
\end{align*}

We can now divide each term by $\gamma=\prod_{j\in[n]}\|z^j\|_1$, which finally leads to
\[
u_i(p)\ge \frac{u_i(\tilde p_i\otimes p_{-i})}{\|z^i\|_1}-\frac{\epsilon'}{\gamma}\ge u_i(\tilde p_i\otimes p_{-i})-\left(\frac{\epsilon'}{\gamma}+\nu'\right)
\]

and since $\|z^j\|_1\in[1-\nu',1+\nu']$ for all $j\in[n]$ this implies that $\gamma\ge (1-\nu')^{n}\ge 1-n\nu'$ and thus
\(
{u_i(p)}\ge u_i(\tilde p_i\otimes p_{-i})-\left(\frac{\epsilon'}{1-n\nu'}+\nu'\right).
\)

Consider the first term $\frac{\epsilon'}{1-n\nu'}$. The following inequalities hold:
\begin{align*}
    \frac{\epsilon'}{1-n\nu'}\le \frac\epsilon2
\end{align*}
since $\epsilon'\le\frac{\epsilon}{2}(1-n\nu')$.

Moreover, $\nu'\le\frac{\epsilon}{2}$ thus proving that ${u_i(p)}\ge u_i(\tilde p_i\otimes p_{-i})-\epsilon$ as desired.
\end{proof}

\subsection{Additional technical lemmas}

\MVT*
\begin{proof}
    Let $\{m_i:[0,1]^K\to\Reals^n\}_{i=1}^m$ be the functions defining the rows of $M$ and $h_i(\tilde z|z)=m_i(\tilde z)^\top z$. With this notation it is easy to check that $\nabla_{\tilde z} h_i(\tilde z|z)=J_{m_i}(\tilde z)^\top z$ and thus $\|\nabla_{\tilde z} h_i(\tilde z|z)\|\le\|J_{m_i}(\tilde z)\|\|z\|\le C\sqrt{mnK}$.

    By the mean value theorem, we have that for some $\xi$ in the segment connecting $\tilde z$ and $\tilde z'$, we have
    \begin{align*}
        |(m_i(\tilde z)-m_i(\tilde z'))^\top z|&\le \|\nabla_{\tilde z} h(\xi|z)\|\cdot\|\tilde z-\tilde z'\|\\
        &\le C\sqrt{mnK}\|\tilde z-\tilde z'\|
    \end{align*}
    Finally,
    \begin{align*}
        \|(M(\tilde z)-M(\tilde z'))z\|^2&=\sum_{i=1}^m((m_i(\tilde z)-m_i(\tilde z'))^\top z)^2\\
        &\le C^2 m^2 n K \|\tilde z-\tilde z'\|^2
    \end{align*}
    concluding the proof.
\end{proof}

\end{document}